\documentclass[journal,10pt]{IEEEtran}

\usepackage[nolist]{acronym}
\begin{acronym}

 \acro{RIS}{reconfigurable intelligent surface}
 \acro{FRIS}{Fluid Reconfigurable Intelligent Surface}
 \acro{FAS}{Fluid Antenna System}
 \acro{SNR}{signal-to-noise ratio}
 \acro {BS}{base station}
 \acro{UE}{user equipment}
 \acro{UPA}{uniform planar array}
 \acro{PDF}{probability density function}
\acro{FRISs}{fluid reconfigurable intelligent surfaces}
\acro{CDF}{cumulative distribution function}

\end{acronym}

\usepackage{amssymb,amsmath,amsfonts,amsthm,bm,bbm}
\usepackage{empheq}
\usepackage{algorithm}
\usepackage[noend]{algpseudocode} % provides \State, \For, \If, ...

\usepackage[caption=false,font=normalsize,labelfont=sf,textfont=sf]{subfig}
\usepackage{graphicx}
\usepackage{mathrsfs}

\usepackage{array}
\usepackage{stfloats}
\usepackage{url}
\usepackage{verbatim}

\usepackage{cite}

\usepackage{float}

\usepackage[colorlinks=true, linkcolor=blue, citecolor=blue, urlcolor=blue, hypertexnames=false, breaklinks=true, linktocpage]{hyperref}

\newtheorem*{theorem}{Theorem}

\newtheorem{corollary}{Corollary}

\DeclareMathOperator{\ii}{\textnormal{{i}}} 
\usepackage{soul}

\begin{document}

\title{Exact Statistical Characterization and Performance Analysis \\ of Fluid Reconfigurable Intelligent Surfaces}

% \title{Statistical Characterization and Performance Analysis of Fluid Reconfigurable Intelligent Surfaces: An Exact Framework}

% \title{On the Statistics of and Performance Evaluation of \\ Fluid Reconfigurable Intelligent Surfaces}

% \title{On the Operating Regime Statistics of Fluid \\ Reconfigurable Intelligent Surfaces}

% \title{On the Statistics and Operating Regimes of  Fluid \\ Reconfigurable Intelligent Surfaces}

% \title{Exact Statistical Characterization and Performance Analysis of \\ Fluid Reconfigurable Intelligent Surfaces}

% \title{Exact  Outage Analysis and Statistical Modeling of Fluid Reconfigurable Intelligent Surfaces}

% \title{Fluid Reconfigurable Intelligent Surfaces: \\ An Exact Characterization and Performance Analysis}

\author{Masoud Khazaee, Felipe A. P. de Figueiredo, Rausley A. A. de Souza, \IEEEmembership{Senior Member,~IEEE}, \\ Farshad Rostami Ghadi, \IEEEmembership{Member,~IEEE}, Kai-Kit Wong, \IEEEmembership{Fellow,~IEEE}, Luciano L. Mendes,~%
\IEEEmembership{Member,~IEEE}, \\ and Fernando D. Almeida García,~\IEEEmembership{Senior Member,~IEEE}
\thanks{This work was funded by the Brasil 6G Project with support from RNP/MCTI (Grant 01245.010604/2020-14), and by the xGMobile Project (XGM-AFCCT-2025-8-1-1 and XGM-AFCCT-2024-9-1-1) with resources from EMBRAPII/MCTI (Grant 052/2023 PPI IoT/Manufatura 4.0), by CNPq (302085/2025-4, 306199/2025-4, 141545/2024-0), by FAPEMIG (APQ-03162-24, PPE-00124-23, RED-00194-23), and by FINEP (nº 1060/2 contract 01.25.0883.00).
}%
\thanks{M. Khazae and L. L. Mendes  are with the Radiocommunications
Reference Center, National Institute of Telecommunications (INATEL), Santa Rita do Sapucaí, MG,  37536-001, Brazil (e-mail: \mbox{masoud.khazaee@dtel.inatel.br}; \mbox{lucianol@inatel.br}).
Felipe A. P. de Figueiredo, Rausley A. A. de Souza, and F. D. A. García are with the Wireless and Artificial Intelligence Laboratory (WAI Lab), National Institute of Telecommunications (INATEL), Santa Rita do Sapucaí, MG,  37536-001, Brazil (e-mail: \mbox{felipe.figueiredo@inatel.br}; \mbox{rausley@inatel.br}; \mbox{fernando.almeida@inatel.br}).

F. Rostami Ghadi and K.-K. Wong are with the Department of Electronic and Electrical
Engineering, University College London, Torrington Place, WC1E 7JE, U.K. (e-mail: \mbox{f.rostamighadi@ucl.ac.uk}; \mbox{kai-kit.wong@ucl.ac.uk}).
K.-K. Wong is also with the Department of Electronic Engineering, Kyung Hee University, Yongin-si, Gyeonggi-do 17104, Republic of Korea.
}
}

% The paper headers
% \markboth{IEEE Transactions on Wireless Communications}%
% {Shell \MakeLowercase{\textit{et al.}}: }

%\IEEEpubid{~\copyright}
% Remember, if you use this you must call \IEEEpubidadjcol in the second
% column for its text to clear the IEEEpubid mark.

\maketitle

\begin{abstract}
Fluid reconfigurable intelligent surfaces (FRIS) extend conventional RIS architectures by enabling physical reconfiguration of element positions, thereby introducing a fundamentally new degree of freedom for controlling spatial correlation and improving link reliability. 
Despite this promise, rigorous performance analysis of FRIS-assisted wireless systems has remained challenging, as exact statistical analyses of the end-to-end cascaded channels have been unavailable.
This paper addresses this gap by providing the first \emph{exact closed-form} characterization of the end-to-end cascaded channel gain in FRIS-aided systems under general spatial correlation. By exploiting the spectral structure of the FRIS-induced correlation matrix, we show that the channel gain statistics can be represented as a finite linear combination of $\boldsymbol{K}$-distributions. This unified formulation naturally captures fully correlated, effectively decorrelated, and intrinsically uncorrelated operating regimes as special cases.
Building on the derived channel statistics, we further obtain exact closed-form expressions for the outage probability and ergodic capacity. We also conduct an outage-based asymptotic analysis, which reveals the \emph{true} diversity order of the system. Numerical results corroborate the proposed analytical framework via Monte Carlo simulations, benchmark its accuracy against state-of-the-art approximation-based approaches, and demonstrate that fluidic reconfiguration can yield tangible reliability gains by reshaping the spatial correlation structure.
\end{abstract}

\begin{IEEEkeywords}
Fluid reconfigurable intelligent surface (FRIS), spatial correlation, probability density function, cumulative distribution function, ergodic capacity, outage probability.
\end{IEEEkeywords}

\section{Introduction}

\IEEEPARstart{R}{econfigurable} intelligent surfaces (RIS) have attracted considerable attention as a cost-effective means to enhance wireless communication systems by intelligently shaping the propagation environment through a large number of nearly passive reflecting elements \cite{10596064}. By appropriately configuring these elements, RIS can improve wireless coverage, enhance energy efficiency, and increase spectrum utilization~\cite{9475155}. 
% In particular, as the number of reflecting elements grows, channel estimation and control overhead become increasingly complex. Moreover, the passive nature of RISs leads to a multiplicative path loss across the cascaded links, which may result in severe signal attenuation. 
Despite this promise, several fundamental challenges hinder the practical deployment of RIS \cite{9765815}. In particular, the close physical spacing of RIS elements often induces strong spatial correlation, which degrades the achievable system performance. Moreover, most existing RIS implementations rely on physically rigid architectures, inherently limiting their ability to exploit spatial diversity---even in scenarios where sufficient physical space is available for reconfiguration.

To address the above RIS limitations, greater spatial adaptability can be introduced by incorporating the fluid antenna system (FAS) paradigm \cite{9264694, 10753482,10909643,11247926,11302793,11175437}. FAS represents a new class of reconfigurable antennas that enable dynamic adjustment of both antenna shape and position, thereby unlocking additional spatial degrees of freedom at the physical layer \cite{yang2025bridgingtheorypracticereconfigurable}. By leveraging controlled physical movement within a confined aperture, FAS-based architectures can effectively mitigate spatial correlation without requiring complex signal processing.
Building upon this concept, fluid reconfigurable intelligent
surfaces (FRIS) \cite{salem2025lookperformanceenhancementpotential} and the more general framework of fluid integrated reflecting and emitting surfaces (FIRES) \cite{11137368} have recently been proposed to address the rigidity of conventional RIS structures by enabling physical reconfiguration of the element positions themselves.
 This fluidic capability allows the system to dynamically select subsets of elements exhibiting reduced mutual correlation, without relying on sophisticated phase-shift optimization. As a result, FRIS offer a fundamentally different mechanism for improving link reliability---one that exploits spatial decorrelation through physical reconfiguration rather than purely electromagnetic tuning---thereby providing a scalable and hardware-efficient alternative for next-generation intelligent surfaces.

\subsection{Related Works}

Although a few preliminary studies have recently appeared, the analytical characterization and performance optimization of RIS-aided fluid antenna systems and FRIS-assisted systems remain at a nascent stage
\cite{11075830 ,ghadi2025coverageanalysisoptimizationfiresassisted,11368654,10978677,10539238,11154019Ghadi,Ghadi2025FRISCovert}. The primary difficulty stems from
deriving the fundamental statistical descriptors---namely, the \ac{PDF} and \ac{CDF}---of the
end-to-end cascaded channel gain induced by RIS and FRIS architectures.
Due to the inherent product and sum–product structures of the cascaded links, obtaining exact closed-form statistics is mathematically challenging.
Consequently, most existing works have relied on approximation-based
approaches. For example, \cite{10978677} approximated the cascaded channel statistics of a RIS-aided fluid antenna system using the central limit theorem (CLT) in conjunction with copula theory, and employed the resulting model to evaluate the secrecy outage probability (SOP). Similarly, \cite{10539238} adopted a CLT- and copula-based approximation to characterize the cascaded channel of a RIS-aided fluid antenna system, which was subsequently used to analyze system performance in terms of
outage probability (OP) and delay outage rate (DOR).
More recently, \cite{11154019Ghadi} approximated the PDF and CDF of the FRIS
cascaded channel by those of a Gamma distribution via a moment-matching
approach, enabling the evaluation of OP and ergodic capacity (EC). Building
upon this approximation, \cite{Ghadi2025FRISCovert} investigated FRIS-assisted
covert communications, analyzing performance metrics such as covertness outage
probability (COP), OP, and success probability (SP).

Despite their valuable insights, the aforementioned works predominantly rely on
asymptotic arguments or distributional approximations. Such approaches may
mask the precise effects of spatial correlation and fluidic reconfiguration on system performance. This observation highlights
the need for exact statistical characterizations of FRIS-assisted cascaded
channels---an open problem in the literature that directly motivates the
analytical framework developed in this work.

\subsection{Contributions}

Motivated by the above discussion, the main contributions of this work are
summarized as follows:
\begin{enumerate}
    \item By means of a rigorous spectral decomposition analysis, we derive
    \emph{exact closed-form} expressions for the PDF and CDF of the end-to-end cascaded
    channel gain of a FRIS-assisted wireless system operating under a general spatial
    correlation regime. The resulting expressions are given as finite linear
    combinations of $K$-distributions. 
    % Notably, the proposed framework applies to both fixed (deterministic) FRIS phase configurations and uniformly random FRIS phase configurations. 
    To the best of the authors’ knowledge, these results have not been previously reported in the literature, including for conventional correlated RIS-assisted systems.

    \item As special cases of the general framework, we further derive exact closed-form expressions for the PDF and CDF under several practically relevant spatial correlation regimes, including effectively decorrelated, fully correlated, and intrinsically uncorrelated scenarios. Each regime is accompanied by a clear physical interpretation that highlights the impact of spatial correlation and fluidic reconfiguration on the resulting channel statistics.

    \item Leveraging the derived channel statistics, we conduct a comprehensive performance analysis. Specifically, we derive novel exact and asymptotic closed-form expressions for the OP, which rigorously establish that the \emph{true} diversity order of the system is equal to one. This result refines and clarifies the diversity behavior reported in \cite[Corollary~1]{11154019Ghadi}.
    % where the diversity order was inferred from an approximation-based Gamma model and therefore tied to the shape parameter of the approximating distribution.
    In addition, we derive new exact closed-form expressions for the EC, thereby enabling a complete and analytically tractable performance evaluation of FRIS-assisted systems.

    \item Extensive numerical simulations are conducted to validate the
    analytical results and to illustrate the impact of fluidic
    reconfiguration and spatial correlation on system performance. The accuracy
    of the proposed analytical framework is further corroborated through
    comparisons with the state-of-the-art approximation in
    \cite{11154019Ghadi}.
\end{enumerate}

% This work derives an \textbf{exact closed-form expression} for the probability density function (PDF) of the equivalent channel gain 
% $G = |H_{\text{eq}}|^2$, expressed as a finite linear combination of $K$-distributions. 
% This explicit formulation, obtained from the quadratic form of correlated complex Gaussian variables, provides a complete statistical characterization of the FRIS-assisted channel, which was not presented in prior studies \cite{11154019Ghadi} that relied solely on Gamma-based approximations.

\subsection{Structure and Notation}

The remainder of this paper is organized as follows. Section~\ref{sec: System Model} describes the
considered FRIS-assisted system model and introduces the underlying channel
assumptions. Section~\ref{sec: Channel Gain Statistics} presents the exact statistical characterization of the
end-to-end cascaded channel gain under general spatial correlation, including
the main theorem and its corollaries. In Section~\ref{sec: Performance Analysis}, the derived statistics are
used to analyze key performance metrics, including the outage probability and
ergodic capacity. Numerical results and discussions are provided in
Section~\ref{sec: Numerical Results} to validate the analytical findings and illustrate the impact of
fluidic reconfiguration and spatial correlation. Finally, Section~\ref{sec: Conclusions} concludes
the paper.

% Throughout the paper, the following notation is used. 
\textit{Notation}: In the sequel, $\Pr(\cdot)$ denotes
probability; $\mathbb{E}[\cdot]$, expectation; $J_0(\cdot)$,
the zeroth-order Bessel function of the first kind; $K_{\nu}(\cdot)$,
the modified Bessel function of the second kind of order $\nu$;
$(\cdot)^\text{T}$ and $(\cdot)^\text{H}$ denote the transpose and Hermitian transpose,
respectively; $|\cdot|$, the absolute value; $\|\cdot\|_{2}$,
the Euclidean norm; $\Gamma(\cdot)$, the Gamma function; and the symbol
$\sim$ means ``distributed as.'' The operator $\mathrm{diag}(\cdot)$ denotes a
diagonal matrix; $\mathbb{N}$, the set of positive integers (excluding
zero); and $\mathbb{C}$, the set of complex numbers.
The notation $\mathcal{G}(a,b)$ represents a Gamma distribution with shape
parameter $a$ and scale parameter $b$; $\mathcal{CN}(a,b)$, a circularly
symmetric complex Gaussian distribution with mean $a$ and variance $b$; and
$\mathrm{Exp}(a)$, an exponential distribution with rate parameter $a$; $\mathcal{U}[a,b)$, a uniform distribution over the
interval $[a,b)$; $\mathbf{I}_{r}$, the $r\times r$ identity matrix; and
$\ii=\sqrt{-1}$, the imaginary unit. Finally, the symbol $\simeq$ denotes
asymptotic equivalence near zero, i.e.,
$h(x)\simeq g(x)
\iff
\lim_{x\to 0}\frac{h(x)}{g(x)}=1$.

\section{System Model}
\label{sec: System Model}

We consider a downlink wireless communication scenario consisting of a
\ac{BS}, a FRIS, and a 
\ac{UE}. The direct \ac{BS}–\ac{UE} path is assumed to be completely 
blocked by surrounding obstacles, so all signal propagation occurs 
exclusively through the surface, as depicted in Fig.~\ref{fig:model}.

\subsection{Array Geometry}

The FRIS is modeled as a \ac{UPA} with
$M = M_x M_z$ passive elements arranged along the horizontal ($x$) and
vertical ($z$) axes.
The inter-element spacing is given by $d = d_w \lambda_c$, where $\lambda_c$ denotes the carrier wavelength and $d_w$ is the normalized (dimensionless) element spacing.
\begin{figure}[t!]
\centering
\includegraphics[width=0.6\linewidth]{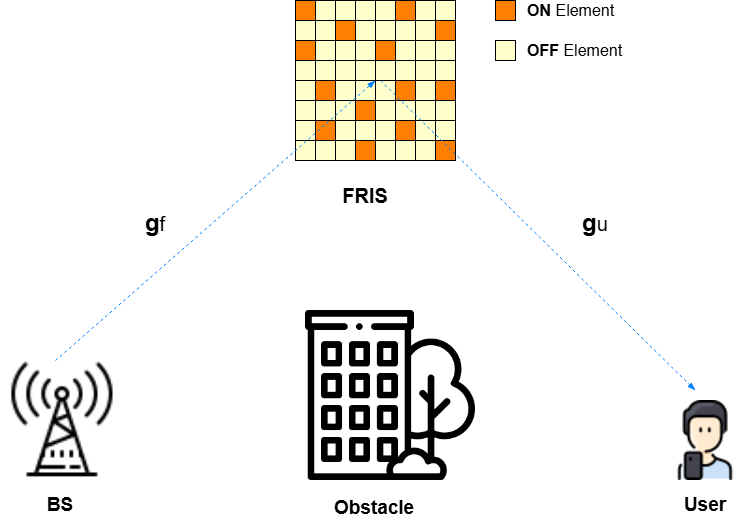}
\caption{System model of the FRIS-aided communication system.}
\label{fig:model}
\end{figure}

The physical location of the $(i,j)$-th reflecting element is
\begin{equation}
\mathbf{p}_{i,j}
=
\big[ i\, d_w \lambda_c,\; j\, d_w \lambda_c \big]^\text{T},
\end{equation}
for $i = 0,\ldots,M_x - 1$ and $j = 0,\ldots,M_z - 1$.

% \[
% \mathbf{p}_{i,j}
% =
% \big[i\, d_\lambda \lambda,\; j\, d_\lambda \lambda \big]^{\top},
% \qquad
% i = 0,\ldots,M_x - 1,\;
% j = 0,\ldots,M_z - 1 .
% \]
For bookkeeping purposes (e.g., active-set selection), it is convenient to 
associate each pair $(i,j)$ with a global linear index
\begin{equation}
m = i + j M_x + 1,
\qquad m = 1,\ldots,M ,
\label{eq:linear_index}
\end{equation}
which enumerates the \ac{UPA} in row-major order.  
This index is used only to label elements; the physical coordinates remain 
represented by $\mathbf{p}_{i,j}$.
Although $M_z$ does not appear explicitly in \eqref{eq:linear_index},
its effect is fully included through the range $j=0,\ldots,M_z-1$, which
determines how many rows the indexing spans.

\subsection{Spatial Correlation and Active-Set Selection}
\label{activation}

In the FRIS architecture, the close physical spacing of adjacent elements 
induces non-negligible spatial correlation, which must be accounted for in the 
system modeling. Throughout this work, spatial correlation is characterized using 
the classical Clarke--Jakes model~\cite{Stuber1996MobileComm}:
\begin{equation}
\rho\!\left(r_{(i,j),(i',j')}\right)
=
J_0\!\left(\frac{2\pi}{\lambda_c}\, r_{(i,j),(i',j')}\right),
\label{eq:jakes_corr}
\end{equation}
where  $r_{(i,j),(i',j')}
\triangleq
\big\|\mathbf{p}_{i,j}-\mathbf{p}_{i',j'}\big\|_2$ is the Euclidean distance between two \ac{UPA} elements located at coordinates 
$(i,j)$ and $(i',j')$.
Since the \ac{UPA} forms a regular grid with inter-element spacing $d$, the distance 
between two elements separated by $p$ horizontal and $q$ vertical indices is
\begin{align}
\label{eq: distance d}
\big\|\mathbf{p}_{i,j}-\mathbf{p}_{i+p,j+q}\big\|_2
= d\sqrt{p^{2}+q^{2}}.
\end{align}

Let $M_{\mathrm{on}}$ denote the number of active elements, and let 
$\mathcal{S}\subseteq\{1,\ldots,M\}$ denote their set of linear indices.
The active-set selection criterion is based on the \emph{worst-case local 
neighbor correlation}~\cite{Sanayei2004AntennaSelection}.\footnote{As 
discussed in~\cite{11075830}, FRIS element selection may also be carried out 
via a binary activation vector jointly optimized with discrete phase shifts 
using cross-entropy–based sampling methods.}  
Specifically, using~\eqref{eq:jakes_corr}, the design constraint is

\begin{equation}
\max_{(p,q)\in\mathcal N}
\left| 
J_0\!\left(
\tfrac{2\pi s d_w}{\lambda_c}\sqrt{p^2+q^2}
\right)
\right|
\;\le\; \tau ,
\label{eq:cap}
\end{equation}
where the stencil
\[
\mathcal N
=
\{(1,0),\, (1,1),\, (2,0),\, (2,1),\, (2,2),\, (3,0)\}
\]
includes the dominant axial and diagonal offsets corresponding to nearest and next-nearest neighbors.
In addition, $\tau\in(0,1)$ is a target correlation threshold, and $s\in\mathbb{N}$ 
denotes the \emph{stride}---a design parameter that specifies how far apart the active FRIS elements are placed within the \ac{UPA} grid. As $s$ grows, the grid (and thus physical) separation between active elements grows, leading to lower spatial correlation.

For each selected index $m\in\mathcal{S}$, let $(i,j)$ denote the associated 
\ac{UPA} coordinates obtained via~\eqref{eq:linear_index}. The spatial correlation matrix restricted to the active set is then
\begin{equation}
\mathbf{R}_{\mathcal{S}}
=
\Big[
J_0\!\left(
\frac{2\pi}{\lambda_c},
\big\lVert \mathbf{p}_{i,j}-\mathbf{p}_{i',j'} \big\rVert_2
\right)
\Big]_{m,m'\in\mathcal{S}}
\in \mathbb{C}^{M_{\mathrm{on}}\times M_{\mathrm{on}}},
\label{eq:RS-def}
\end{equation}
whose entries represent pairwise Clarke--Jakes spatial correlations between the 
active FRIS elements.  
Owing to the properties of the isotropic scattering model (see Section~\ref{sec: Received Signal Model}) and the even symmetry 
of the Bessel function $J_0(\cdot)$, the matrix $\mathbf{R}_{\mathcal{S}}$ is 
Hermitian and positive semidefinite~\cite[Sec.~2.4]{Stuber1996MobileComm}.

\subsection{Received Signal Model}
\label{sec: Received Signal Model}
% \subsection{Received Signal and FRIS Setup}
% \subsection{Received Signal Model and FRIS Configuration}
% \subsection{Received Signal and Small-Scale Fading}

Under the considered wireless setup, the received baseband signal at the UE can be expressed as\footnote{It is worth mentioning that although the activation procedure for the FRIS elements differs from that in \cite[eq.~(3)]{11154019Ghadi}, the resulting received baseband signals are equivalent.}
\begin{equation}
y = \sqrt{P L_\text{f} L_\text{u}}\;
\mathbf{g}_\text{u}^\text{H}\,
\mathbf{R}_{\mathcal{S}}^{1/2}\,
\boldsymbol{\Phi}\,
\mathbf{R}_{\mathcal{S}}^{1/2}\,
\mathbf{g}_\text{f}\, x + z,
\label{eq:rx-final}
\end{equation}
where $P$ denotes the transmit power, $x$ is the unit-energy transmitted symbol, and  
$z \sim \mathcal{CN}(0,N_0)$ represents additive white Gaussian noise (AWGN) with noise power spectral density $N_0$.  
The large-scale attenuation factors associated with the \ac{BS}--FRIS and FRIS--\ac{UE} links are given by
$L_\text{f} = \rho_\text{f} d_\text{f}^{-\alpha_\text{f}}$ and $L_\text{u} = \rho_\text{u} d_\text{u}^{-\alpha_\text{u}}$, respectively, where $\alpha_\text{f}$ and $\alpha_\text{u}$ are the pathloss exponents and $d_\text{f}$ and $d_\text{u}$ denote the corresponding link distances.
The vectors $\mathbf{g}_\text{f}, \mathbf{g}_\text{u} \sim \mathcal{CN}(\mathbf{0},\mathbf{I}_{M_{\mathrm{on}}})$ model independent small-scale fading on the BS--FRIS and FRIS--UE links, respectively, and $\mathbf{R}_{\mathcal{S}}$ is the spatial correlation matrix defined in~\eqref{eq:RS-def}.  
Moreover, $\mathbf{g}_\text{f}$ and $\mathbf{g}_\text{u}$ are statistically independent, and $\boldsymbol{\Phi} = \mathrm{diag}\!\left(e^{\ii \theta_1},\ldots,e^{\ii \theta_{M_{\mathrm{on}}}}\right)$ denotes the FRIS reflection matrix, whose diagonal entries impose the element-wise phase shifts. The vector of reflection phases is 
$\boldsymbol{\theta}\triangleq[\theta_1,\ldots,\theta_{M_{\mathrm{on}}}]^\text{T}$, with each $\theta_\ell \in [0,2\pi)$ representing the $\ell$-th phase shift for $\ell \in \{1,\ldots,M_{\mathrm{on}}\}$.

% The diagonal matrix $ \boldsymbol{\Phi} = \mathrm{diag}\left(e^{j\theta_1},\ldots,e^{j\theta_{M_{\mathrm{on}}}}\right)$
% contains the FRIS-induced phase shifts, where $\theta_\ell \in [0,2\pi)$ denotes the $\ell$-th reflection phase.

Observe from~\eqref{eq:rx-final} that the equivalent correlated channels are given by
$\tilde{\mathbf{g}}_\text{f} \triangleq \mathbf{R}_{\mathcal{S}}^{1/2}\mathbf{g}_\text{f}$ and
$\tilde{\mathbf{g}}_\text{u} \triangleq \mathbf{R}_{\mathcal{S}}^{1/2}\mathbf{g}_\text{u}$ for the BS--FRIS and FRIS--UE links, respectively.

For compactness, define the composite coupling matrix and the end-to-end channel gain as, respectively, 
\begin{align}
    \mathbf{A}
    &\triangleq 
    \mathbf{R}_{\mathcal{S}}^{1/2}\,
    \boldsymbol{\Phi}\,
    \mathbf{R}_{\mathcal{S}}^{1/2}, 
    \label{eq:A-theta}
    \\
    G_0 
    &\triangleq 
    \big|\mathbf{g}_\text{u}^\text{H}\mathbf{A}\mathbf{g}_\text{f}\big|^2.
    \label{eq:G0-def}
\end{align}

% --- Add after Eq. (8) / after defining A and C ---

In \eqref{eq:A-theta}, the phase-shift matrix $\boldsymbol{\Phi}$ is assumed to be arbitrary but deterministic.
Therefore, all derived distributions and performance metrics are understood as being conditioned on 
$\boldsymbol{\Phi}$. 
% The only randomness is due to the small-scale fading vectors $\mathbf{g}_{\mathrm{f}}$ and $\mathbf{g}_{\mathrm{u}}$.

% \subsection{Instantaneous SNR, Outage Probability, and Ergodic Capacity}

\subsection{Definitions and Performance Metrics}

\subsubsection{Instantaneous SNR}
From \eqref{eq:rx-final}, the instantaneous signal-to-noise ratio (SNR) at the UE is given by
\begin{equation}
    \gamma = \bar{\gamma}\, L_\text{f} L_\text{u}\, G_0,
    \label{eq:SNR-def}
\end{equation}
where $\bar{\gamma} \triangleq P/N_0$ denotes the normalized average received SNR.

\subsubsection{Instantaneous Channel Capacity}
The instantaneous channel capacity, measured in bits/s/Hz, is defined as
\begin{equation}
    C \triangleq \log_2\!\left(1+\gamma\right).
\end{equation}

\subsubsection{Outage Probability}
The OP is defined as the probability that the instantaneous channel capacity falls below a target rate $R_0$, i.e.,
\begin{equation}
    P_{\mathrm{out}} \triangleq \Pr\!\left( \log_2(1+\gamma) \leq R_0 \right).
    \label{eq:OP-def}
\end{equation}
Substituting \eqref{eq:SNR-def} into \eqref{eq:OP-def}, the OP can be equivalently expressed in terms of the cascaded channel gain $G_0$ as
\begin{equation}
    P_{\mathrm{out}}
    = \Pr\!\left(G_0 < \tilde{R} \right)
    = F_{G_0}\!\left( \tilde{R} \right),
    \label{eq:OP-final}
\end{equation}
where $F_{G_0}(\cdot)$ denotes the CDF of $G_0$, and
\begin{equation}
    \tilde{R} \triangleq \frac{2^{R_0}-1}{\bar{\gamma} L_\text{f} L_\text{u}}.
    \label{eq:SNR-rate}
\end{equation}

\subsubsection{Ergodic Capacity}
The EC, in bits/s/Hz, is defined as the expectation of the instantaneous channel capacity, namely,
\begin{align}
    \bar{C}
    &\triangleq \mathbb{E}\!\left[\log_2(1+\gamma)\right] \nonumber\\
    &= \int_{0}^{\infty}
    \log_2\!\left(1+\bar{\gamma}\, L_\text{f} L_\text{u}\, g\right)
    f_{G_0}(g)\, \mathrm{d}g,
    \label{eq:C-def}
\end{align}
where $f_{G_0}(\cdot)$ denotes the probability density function (PDF) of $G_0$.

% \vspace{1mm}
% \noindent
From \eqref{eq:OP-final} and \eqref{eq:C-def}, it is evident that evaluating the performance of the considered FRIS-assisted wireless communication system requires a complete statistical characterization of $G_0$, namely, the derivation of its PDF and CDF. However, obtaining these statistics is highly nontrivial, and to the best of our knowledge, no exact closed-form results have been previously reported. In the next section, we address this gap by deriving the first exact closed-form expressions for the PDF and CDF of $G_0$.

% Let $x$ have unit power and let $z \sim \mathcal{CN}(0,N_0)$ denote the additive noise.  
% Define the average transmit SNR as $\bar{\gamma}=P/N_0$.  
% Using~\eqref{eq:A-theta}, the received signal is
% \begin{equation}
% y = \sqrt{P L_\text{f} L_\text{u}}\;
% \mathbf{g}_\text{u}^\text{H}\mathbf{A}(\boldsymbol{\theta})\mathbf{g}_\text{f}\,x + z ,
% \end{equation}
% which yields the instantaneous SNR
% \begin{equation}
% \gamma = \bar{\gamma}\,L_\text{f} L_\text{u}\,G_0,
% \qquad
% G_0 = \big|\mathbf{g}_\text{u}^\text{H}\mathbf{A}(\boldsymbol{\theta})\mathbf{g}_\text{f}\big|^2 .
% \end{equation}

% For a target spectral efficiency $R_0$, an outage occurs when
% \[
% \log_2(1+\gamma) < R_0.
% \]
% This is equivalent to $G_0$ falling below a threshold, leading to
% \begin{equation}
% P_{\mathrm{out}}
% = \Pr\!\left(G_0 < \frac{2^{R_0}-1}{\bar{\gamma}L_\text{f} L_\text{u}}\right)
% = F_{G_0}\!\left(\frac{2^{R_0}-1}{\bar{\gamma}L_\text{f} L_\text{u}}\right),
% \label{eq:outage_ls}
% \end{equation}
% where $F_{G_0}(\cdot)$ is the cumulative distribution function (CDF) of $G_0$.

% Equation~\eqref{eq:outage_ls} shows that the large-scale fading terms $L_\text{f}$ and $L_\text{u}$ affect the outage probability only by \emph{scaling the threshold} of the CDF.  
% They do not modify the underlying distribution of $G_0$, which depends solely on the spatial correlation, the active-set geometry, and the phase configuration.

\section{Statistical Characterization}
% \section{Total Channel Gain Statistics}
\label{sec: Channel Gain Statistics}

In this section, we derive exact, closed-form expressions for the PDF and CDF of the end-to-end channel gain $G_0$. 
These derivations are attained in the next Theorem and Corollaries.

\subsection{Main Results}

\begin{theorem}
\label{eq: Theorem}
The PDF and CDF of $G_0$ can be written as a finite linear combination of {$K$-distributions} given, respectively, by
% as finite mixtures of K-distributions, given by
\begin{align}
f_{G_0}(g)
&= \sum_{i=1}^{q}\sum_{k=1}^{m_i} c_{i,k}\, f_{K}\!\big(k,\lambda_i,g\big),
\label{eq:PDF_G0}\\
F_{G_0}(g)
&= \sum_{i=1}^{q}\sum_{k=1}^{m_i} c_{i,k}\, F_{K}\!\big(k,\lambda_i,g\big),
\label{eq:CDF_G0}
\end{align}
where $f_K(\cdot)$ and $F_K(\cdot)$ denote the PDF and CDF of a {$K$–distribution} with shape parameter $k$ and scale parameter $\lambda_i$, given by
\begin{align}
f_{K}\!\big(k,\lambda_i,g\big)
&= \frac{2\,\lambda_i^{-\frac{k+1}{2}}}{\Gamma(k)}\,
   g^{\frac{k-1}{2}}
   K_{k-1}\!\left(2\sqrt{\frac{g}{\lambda_i}}\right)
\label{eq:PDF_K}\\
F_{K}\!\big(k,\lambda_i,g\big)
&= 1 - \frac{2}{\Gamma(k)}
   \left(\frac{g}{\lambda_i}\right)^{\frac{k}{2}}
   K_{k}\!\left(2\sqrt{\frac{g}{\lambda_i}}\right).
\label{eq:CDF_K}
\end{align}

Furthermore, let $\{\lambda_i\}_{i=1}^{q}$ denote the $q$ distinct \emph{strictly positive} eigenvalues of
$\mathbf{C}=\mathbf{A}\mathbf{A}^{\mathsf{H}}$, with corresponding multiplicities $\{m_i\}_{i=1}^{q}$.
General multiplicities are allowed, i.e., $m_i>1$ is permitted. These multiplicities satisfy
$\sum_{i=1}^{q} m_i = r$, where $r \triangleq \operatorname{rank}(\mathbf{C}) \le M_{\mathrm{on}}$.
Zero eigenvalues are excluded, as they do not contribute to the mixture representation.

% Furthermore, let $r \triangleq \operatorname{rank}(\mathbf{C}) \le M_{\mathrm{on}}$.
% If $\mathbf{C}$ is rank-deficient, it has $M_{\mathrm{on}}-r$ zero eigenvalues.
% Let $\{\lambda_i\}_{i=1}^{q}$ denote the $q$ distinct \emph{strictly positive} eigenvalues of $\mathbf{C}$,
% with multiplicities $\{m_i\}_{i=1}^{q}$ (general multiplicities), such that $\sum_{i=1}^{q} m_i = r$.
% Zero eigenvalues are excluded since they do not contribute to the mixture representation.}

The coefficients $c_{i,k}$ satisfy  
$\sum_{i=1}^{q}\sum_{k=1}^{m_i} c_{i,k}=1$ and are given by
\begin{align}
    \label{eq: coeff cik}
    c_{i,k} = \frac{1}{(m_i-k)!}\, \frac{d^{\,m_i-k}}{du^{\,m_i-k}} H_i(u)\Big|_{u=0},
\end{align}
where 
\begin{align}
    \label{}
    H_i(u)= \prod_{\substack{j=1\\ j\neq i}}^{q} \left( \frac{\lambda_j}{\lambda_i} u + 1 -  \frac{\lambda_j}{\lambda_i}\right)^{-m_j}.
\end{align}

\end{theorem}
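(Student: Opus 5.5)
Condition on $\mathbf{g}_\text{f}$. Since $\mathbf{g}_\text{u}\sim\mathcal{CN}(\mathbf{0},\mathbf{I}_{M_{\mathrm{on}}})$ is independent of $\mathbf{g}_\text{f}$, the scalar $\mathbf{g}_\text{u}^\text{H}\mathbf{A}\mathbf{g}_\text{f}$ is, given $\mathbf{g}_\text{f}$, circularly symmetric complex Gaussian with variance $Q\triangleq \mathbf{g}_\text{f}^\text{H}\mathbf{A}^\text{H}\mathbf{A}\mathbf{g}_\text{f}$. Hence $G_0\,|\,Q\sim\mathrm{Exp}(1/Q)$, i.e. $G_0 = Q\,E$ with $E\sim\mathrm{Exp}(1)$ independent of $Q$. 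The matrix $\mathbf{A}^\text{H}\mathbf{A}$ is unitarily similar to $\mathbf{C}=\mathbf{A}\mathbf{A}^\text{H}$, so the two share their nonzero eigenvalues. Diagonalizing $\mathbf{A}^\text{H}\mathbf{A}=\mathbf{U}\boldsymbol{\Lambda}\mathbf{U}^\text{H}$ and using the unitary invariance of $\mathbf{g}_\text{f}$, we obtain $Q=\sum_{i=1}^{q}\lambda_i Y_i$ with independent $Y_i\sim\mathcal{G}(m_i,1)$. Zero eigenvalues drop out of this sum, which is why only the $r$ positive eigenvalues enter.

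Next, I will obtain the law of $Q$ as a Gamma mixture by partial fractions. Its moment generating function is
\[
\mathbb{E}[e^{-sQ}]=\prod_{i=1}^{q}(1+\lambda_i s)^{-m_i}=\sum_{i=1}^{q}\sum_{k=1}^{m_i} c_{i,k}(1+\lambda_i s)^{-k},
\]
so $f_Q=\sum_{i,k}c_{i,k}\,\mathcal{G}(k,\lambda_i)$. To identify the coefficients, substitute $u=1+\lambda_i s$. The factor for $j\neq i$ becomes $1+\lambda_j s = \frac{\lambda_j}{\lambda_i}u+1-\frac{\lambda_j}{\lambda_i}$, so $(1+\lambda_i s)^{m_i}\mathbb{E}[e^{-sQ}]=H_i(u)$. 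The standard Heaviside cover-up rule for repeated poles then gives $c_{i,k}=\frac{1}{(m_i-k)!}H_i^{(m_i-k)}(0)$, as in \eqref{eq: coeff cik}. Setting $s=0$ yields $\sum c_{i,k}=1$. Note that the coefficients may be negative, so this is a signed linear combination rather than a true mixture. This does not matter, because everything downstream is linear.

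Finally, for each component I will show that $Q_{k}\sim\mathcal{G}(k,\lambda)$ times an independent $E\sim\mathrm{Exp}(1)$ has density $f_K(k,\lambda,g)$. Write
\[
f(g)=\int_0^\infty \frac{1}{x}e^{-g/x}\,\frac{x^{k-1}e^{-x/\lambda}}{\Gamma(k)\lambda^{k}}\,\mathrm{d}x .
\]
The integral $\int_0^\infty x^{\nu-1}e^{-\beta/x-\gamma x}\,\mathrm{d}x=2(\beta/\gamma)^{\nu/2}K_\nu(2\sqrt{\beta\gamma})$, applied with $\nu=k-1$, $\beta=g$ and $\gamma=1/\lambda$, gives \eqref{eq:PDF_K} after simplifying powers. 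For the CDF, use $\Pr(QE\le g)=1-\mathbb{E}[e^{-g/Q}]$. The same integral with $\nu=k$ produces the term $\frac{2}{\Gamma(k)}(g/\lambda)^{k/2}K_k(2\sqrt{g/\lambda})$, which gives \eqref{eq:CDF_K}. Linearity of integration then yields \eqref{eq:PDF_G0} and \eqref{eq:CDF_G0}.

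I expect the main obstacle to be the partial-fraction step with general multiplicities: verifying carefully that the derivative formula reproduces exactly the stated $H_i(u)$ after the change of variables. The Bessel integrals and the conditional-Gaussian argument are routine by comparison.
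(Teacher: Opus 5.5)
Your proposal is correct and follows essentially the same route as the paper. Both arguments use the conditional exponential law of $G_0$ given a Hermitian quadratic form, the spectral reduction to $\sum_i \lambda_i\,\mathcal{G}(m_i,1)$, partial fractions of $\prod_{i}(1+\lambda_i s)^{-m_i}$ with the local variable $u=1+\lambda_i s$ producing $H_i(u)$, and the integral $\int_0^\infty x^{\nu-1}e^{-\beta/x-\gamma x}\,\mathrm{d}x = 2(\beta/\gamma)^{\nu/2}K_\nu(2\sqrt{\beta\gamma})$.

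There are only two minor deviations. First, you condition on $\mathbf{g}_\text{f}$, so you need the fact that $\mathbf{A}^\text{H}\mathbf{A}$ and $\mathbf{A}\mathbf{A}^\text{H}$ share their nonzero spectrum; the paper conditions on $\mathbf{g}_\text{u}$ and obtains $\mathbf{C}$ directly. Second, you obtain the CDF as $1-\mathbb{E}[e^{-g/Q}]$ via the same integral with $\nu=k$, whereas the paper integrates the PDF using a Bessel derivative identity. Your cover-up argument also correctly keeps the $O(u^{m_i})$ remainder in $H_i(u)$, which the paper's version drops.
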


\begin{proof}
see Appendix~\ref{app: Theorem}.    
\end{proof}

The Theorem also leads to the following corollaries.

\begin{corollary}
\label{sec: corollary 1}
If $m_i=1$ for every $i$ (i.e., all eigenvalues of $\mathbf{C}$ are simple and the
Laplace transform contains only first–order poles), the PDF and CDF in
\eqref{eq:PDF_G0} and \eqref{eq:CDF_G0} reduce, respectively, to
\begin{align}
f_{G_0}(g)
&= \sum_{i=1}^{q} c_{i,1}\, f_{K}\!\big(1,\lambda_i,g\big),
\label{eq:PDF_G0-simple}\\[1mm]
F_{G_0}(g)
&= \sum_{i=1}^{q} c_{i,1}\, F_{K}\!\big(1,\lambda_i,g\big),
\label{eq:CDF_G0-simple}
\end{align}
where $f_K(\cdot)$ and $F_K(\cdot)$ denote the PDF and CDF of a
$K$–distribution with shape parameter $1$ and scale parameter $\lambda_i$, given by
\begin{align}
f_{K}\!\big(1,\lambda_i,g\big)
&= \frac{2}{\lambda_i}\,
   K_{0}\!\left(2\sqrt{\frac{g}{\lambda_i}}\right),
\label{eq:PDFsimple-final} \\[1mm]
F_{K}\!\big(1,\lambda_i,g\big)
&= 1 -
   2\sqrt{\frac{g}{\lambda_i}}\,
   K_{1}\!\left(2\sqrt{\frac{g}{\lambda_i}}\right).
\label{eq:CDFsimple-final}
\end{align}
In addition, the coefficients satisfy $\sum_{i=1}^{q} c_{i,1}=1$ and are given by
\begin{align}
c_{i,1}
= \prod_{\substack{j=1 \\ j\neq i}}^{q}
  \frac{\lambda_i}{\lambda_i-\lambda_j}.
\label{eq:ci1-simple}
\end{align}
\end{corollary}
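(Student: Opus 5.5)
The corollary is the Theorem specialized to simple eigenvalues, so the plan is to substitute $m_i=1$ for every $i$ into \eqref{eq:PDF_G0}--\eqref{eq: coeff cik} and simplify each piece. We take the Theorem as given.

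\textbf{Step 1 (collapse the double sum).} When $m_i=1$, the inner sum over $k$ keeps only the term $k=1$. Hence \eqref{eq:PDF_G0} and \eqref{eq:CDF_G0} reduce to \eqref{eq:PDF_G0-simple} and \eqref{eq:CDF_G0-simple}. The normalization $\sum_{i,k}c_{i,k}=1$ from the Theorem becomes $\sum_{i=1}^q c_{i,1}=1$.

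\textbf{Step 2 (specialize the $K$-distribution).} Set $k=1$ in \eqref{eq:PDF_K}. Since $\Gamma(1)=1$, $\lambda_i^{-1}$ is the power of $\lambda_i$ and $g^0=1$, this gives \eqref{eq:PDFsimple-final}. Setting $k=1$ in \eqref{eq:CDF_K} gives \eqref{eq:CDFsimple-final} directly.

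\textbf{Step 3 (the coefficients).} In \eqref{eq: coeff cik}, the order of differentiation is $m_i-k=0$, so $c_{i,1}=H_i(0)$. With all $m_j=1$,
\[
H_i(0)=\prod_{j\neq i}\Big(1-\frac{\lambda_j}{\lambda_i}\Big)^{-1}=\prod_{j\neq i}\frac{\lambda_i}{\lambda_i-\lambda_j},
\]
which is \eqref{eq:ci1-simple}.

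\textbf{Optional independent check of the normalization.} One can also verify $\sum_i c_{i,1}=1$ directly as a partial-fraction identity. Expand
\[
\prod_{j=1}^q \frac{1}{1+\lambda_j s}=\sum_i \frac{c_{i,1}}{1+\lambda_i s},
\]
where the residue at $s=-1/\lambda_i$ yields exactly $c_{i,1}$. Evaluating both sides at $s=0$ gives $1=\sum_i c_{i,1}$. This also ties the result to the Laplace-transform picture mentioned in the statement: first-order poles correspond to $K_0$ components.

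\textbf{Obstacle.} There is essentially none. The only point needing care is the convention that the zeroth derivative with $0!=1$ is simply evaluation at $u=0$, together with the fact that the $\lambda_i$ are distinct, so the denominators $\lambda_i-\lambda_j$ are nonzero.
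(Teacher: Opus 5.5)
Your proposal is correct and is essentially the paper's argument. The paper obtains $c_{i,1}$ as the simple-pole limit $\lim_{s\to -1/\lambda_i}(1+\lambda_i s)\,\Phi_T(s)$, which is exactly your $H_i(0)$, and then uses $c_{i,k}=0$ for $k\ge 2$ to arrive at the $K_0$ and $K_1$ forms. Your evaluation $\Phi_T(0)=1=\sum_{i} c_{i,1}$ additionally makes explicit the normalization that the paper leaves implicit.
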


\begin{proof}
see Appendix~\ref{app: corollary 1}.    
\end{proof}

\begin{corollary}
\label{sec: corollary 2}
If $q=1$ (i.e., $\mathbf{C}$ has a single distinct eigenvalue $\lambda_1$ with
multiplicity $m_1$), then the quadratic form $T = \mathbf{g}_\text{u}^\text{H}\mathbf{C}\mathbf{g}_\text{u}$
follows a Gamma distribution with shape $m_1$ and scale $\lambda_1$, and the
PDF and CDF of $G_0$ reduce, respectively, to
\begin{align}
    f_{G_0}(g)
    &= \frac{2}{\Gamma(m_1)}\,
       \lambda_1^{-\frac{m_1+1}{2}}\,
       g^{\frac{m_1-1}{2}}\,
       K_{m_1-1}\!\left(2\sqrt{\frac{g}{\lambda_1}}\right) 
       \label{eq:PDF-cor2-final} \\
    F_{G_0}(g)
    &= 1 -
       \frac{2}{\Gamma(m_1)}\,
       \lambda_1^{-\frac{m_1}{2}}\,
       g^{\frac{m_1}{2}}\,
       K_{m_1}\!\left(2\sqrt{\frac{g}{\lambda_1}}\right),
       \label{eq:CDF-cor2-final}
\end{align}
which correspond to the classical $K$-distribution with
shape parameter $m_1$ and scale parameter $\lambda_1$.
\end{corollary}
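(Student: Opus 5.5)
Corollary~\ref{sec: corollary 2} is the specialization $q=1$ of the Theorem, and I would prove it that way, with a short direct check for transparency. The key step is the conditional structure already used for the Theorem. Conditioned on $\mathbf{g}_\text{u}$, the scalar $\mathbf{g}_\text{u}^\text{H}\mathbf{A}\mathbf{g}_\text{f}$ is a linear form in $\mathbf{g}_\text{f}\sim\mathcal{CN}(\mathbf{0},\mathbf{I}_{M_{\mathrm{on}}})$. It is therefore $\mathcal{CN}(0,T)$ with
\[
T=\mathbf{g}_\text{u}^\text{H}\mathbf{A}\mathbf{A}^\text{H}\mathbf{g}_\text{u}=\mathbf{g}_\text{u}^\text{H}\mathbf{C}\mathbf{g}_\text{u}.
\]
Hence $G_0\,|\,T\sim\mathrm{Exp}(1/T)$, i.e., $G_0$ is exponential with mean $T$.

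First I identify the law of $T$. Write $\mathbf{C}=\mathbf{U}\boldsymbol{\Lambda}\mathbf{U}^\text{H}$, with $\boldsymbol{\Lambda}=\mathrm{diag}(\lambda_1\mathbf{I}_{m_1},\mathbf{0})$ since $q=1$. The vector $\mathbf{w}=\mathbf{U}^\text{H}\mathbf{g}_\text{u}$ is again $\mathcal{CN}(\mathbf{0},\mathbf{I})$ by unitary invariance, so $T=\lambda_1\sum_{\ell=1}^{m_1}|w_\ell|^2$. This is a sum of $m_1$ i.i.d. exponentials with mean $\lambda_1$, so $T\sim\mathcal{G}(m_1,\lambda_1)$. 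Equivalently, the Laplace transform $(1+\lambda_1 s)^{-m_1}$ has a single pole. This matches the Theorem: with $q=1$ the product defining $H_1(u)$ is empty, so $H_1\equiv1$. All its derivatives vanish, so $c_{1,k}=0$ for $k<m_1$ and $c_{1,m_1}=1$. The Theorem's mixture then collapses to the single term $f_K(m_1,\lambda_1,g)$ and $F_K(m_1,\lambda_1,g)$, which are exactly \eqref{eq:PDF-cor2-final}--\eqref{eq:CDF-cor2-final} after expanding \eqref{eq:PDF_K}--\eqref{eq:CDF_K}.

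For a self-contained verification I would also compute the mixture integral directly:
\[
f_{G_0}(g)=\int_0^\infty \frac{1}{t}e^{-g/t}\,\frac{t^{m_1-1}e^{-t/\lambda_1}}{\Gamma(m_1)\lambda_1^{m_1}}\,\mathrm{d}t .
\]
I would evaluate it with the standard integral $\int_0^\infty t^{\nu-1}e^{-\beta/t-\gamma t}\,\mathrm{d}t=2(\beta/\gamma)^{\nu/2}K_\nu(2\sqrt{\beta\gamma})$, taking $\nu=m_1-1$, $\beta=g$, $\gamma=1/\lambda_1$. This gives $\frac{2}{\Gamma(m_1)}\lambda_1^{-(m_1+1)/2}g^{(m_1-1)/2}K_{m_1-1}(2\sqrt{g/\lambda_1})$.

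For the CDF, I use $\Pr(G_0>g\,|\,T)=e^{-g/T}$ and the same integral with $\nu=m_1$:
\[
1-F_{G_0}(g)=\frac{1}{\Gamma(m_1)\lambda_1^{m_1}}\int_0^\infty t^{m_1-1}e^{-g/t-t/\lambda_1}\,\mathrm{d}t=\frac{2}{\Gamma(m_1)}\left(\frac{g}{\lambda_1}\right)^{m_1/2}K_{m_1}\!\left(2\sqrt{\frac{g}{\lambda_1}}\right).
\]

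The only delicate point is bookkeeping rather than analysis. I must confirm that zero eigenvalues of $\mathbf{C}$, present when $m_1<M_{\mathrm{on}}$, contribute nothing to $T$, and that the index conventions of \eqref{eq: coeff cik} give $c_{1,m_1}=1$ and not some other $c_{1,k}$. Both follow immediately from $H_1\equiv1$.
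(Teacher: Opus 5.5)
Your proposal is correct and follows essentially the paper's own proof. Like the paper, you rotate $\mathbf{g}_\text{u}$ to get $T=\lambda_1\sum_{\ell=1}^{m_1}|w_\ell|^2\sim\mathcal{G}(m_1,\lambda_1)$, use the conditional exponential law of $G_0\,|\,T$, and apply the standard integral $\int_0^\infty t^{\nu-1}e^{-\beta/t-\gamma t}\,\mathrm{d}t=2(\beta/\gamma)^{\nu/2}K_\nu(2\sqrt{\beta\gamma})$ for the PDF. The one minor difference is the CDF: you average the conditional survival function $e^{-g/T}$ with the same integral at $\nu=m_1$, which is slightly more direct than the paper's route of integrating the PDF from $0$ to $g$ with a Bessel derivative identity, and your $H_1\equiv 1$ check against the Theorem is a consistency step the paper's appendix does not spell out.
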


\begin{proof}
see Appendix~\ref{app: corollary 2}.  
\end{proof}

\begin{corollary}
\label{sec: corollary 3}
If the FRIS elements are intrinsically uncorrelated (i.e., $\mathbf{R}_{\mathcal{S}}=\mathbf{I}_{M_{\mathrm{on}}}$), 
the PDF and CDF of $G_0$ reduce, respectively, to
\begin{align}
f_{G_0}(g)
&=
\frac{2}{\Gamma(M_{\mathrm{on}})}
\, g^{\frac{M_{\mathrm{on}}-1}{2}}\,
K_{M_{\mathrm{on}}-1}\!\left(2\sqrt{g}\right)
\label{eq:PDF_G0_independent}
\\
F_{G_0}(g)
&=
1 -
\frac{2}{\Gamma(M_{\mathrm{on}})}
\, g^{\frac{M_{\mathrm{on}}}{2}}\,
K_{M_{\mathrm{on}}}\!\left(2\sqrt{g}\right).
\label{eq:CDF_G0_independent}
\end{align}
\end{corollary}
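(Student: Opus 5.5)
Corollary~\ref{sec: corollary 3} follows by reducing to Corollary~\ref{sec: corollary 2}. We first identify the matrix $\mathbf{C}=\mathbf{A}\mathbf{A}^{\mathsf{H}}$ when $\mathbf{R}_{\mathcal{S}}=\mathbf{I}_{M_{\mathrm{on}}}$. In that case $\mathbf{R}_{\mathcal{S}}^{1/2}=\mathbf{I}_{M_{\mathrm{on}}}$, so by \eqref{eq:A-theta} we have $\mathbf{A}=\boldsymbol{\Phi}$. Because $\boldsymbol{\Phi}$ is diagonal with unit-modulus entries $e^{\ii\theta_\ell}$, it is unitary. Hence
\begin{align}
\mathbf{C}=\boldsymbol{\Phi}\boldsymbol{\Phi}^{\mathsf{H}}=\mathrm{diag}\!\left(|e^{\ii\theta_1}|^2,\ldots,|e^{\ii\theta_{M_{\mathrm{on}}}}|^2\right)=\mathbf{I}_{M_{\mathrm{on}}},
\end{align}
for every deterministic phase configuration $\boldsymbol{\theta}$. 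Consequently $\mathbf{C}$ has full rank $r=M_{\mathrm{on}}$ and a single distinct eigenvalue $\lambda_1=1$ with multiplicity $m_1=M_{\mathrm{on}}$, so $q=1$.

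This is exactly the hypothesis of Corollary~\ref{sec: corollary 2}. Substituting $\lambda_1=1$ and $m_1=M_{\mathrm{on}}$ into \eqref{eq:PDF-cor2-final} and \eqref{eq:CDF-cor2-final} gives \eqref{eq:PDF_G0_independent} and \eqref{eq:CDF_G0_independent} directly, since every power of $\lambda_1$ equals one. The same result also follows from the Theorem. With $q=1$ the product defining $H_1(u)$ is empty, so $H_1\equiv 1$. Then $c_{1,k}=0$ for $k<m_1$ and $c_{1,m_1}=1$, and only the term $f_K(M_{\mathrm{on}},1,g)$ survives.

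As an independent sanity check, we can verify the result from first principles. Conditioned on $\mathbf{g}_\text{f}$, the scalar $\mathbf{g}_\text{u}^{\mathsf{H}}\boldsymbol{\Phi}\mathbf{g}_\text{f}$ is $\mathcal{CN}(0,\|\boldsymbol{\Phi}\mathbf{g}_\text{f}\|_2^2)=\mathcal{CN}(0,\|\mathbf{g}_\text{f}\|_2^2)$. Therefore $G_0\,|\,T\sim\mathrm{Exp}(1/T)$, where $T=\|\mathbf{g}_\text{f}\|_2^2\sim\mathcal{G}(M_{\mathrm{on}},1)$. Averaging the conditional density over $T$ gives
\begin{align}
\int_0^\infty t^{M_{\mathrm{on}}-2}e^{-t-g/t}\,\mathrm{d}t=2g^{\frac{M_{\mathrm{on}}-1}{2}}K_{M_{\mathrm{on}}-1}(2\sqrt{g}),
\end{align}
which follows from the standard integral representation of $K_\nu$. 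Dividing by $\Gamma(M_{\mathrm{on}})$ reproduces \eqref{eq:PDF_G0_independent}. The CDF is obtained in the same way, using the identity $1-\mathbb{E}[e^{-g/T}]$ together with the same integral.

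There is no real obstacle in this proof. The only point needing care is confirming that the phase matrix cancels in $\mathbf{C}$, which holds because $\boldsymbol{\Phi}$ is unitary. As a result, the distribution of $G_0$ does not depend on $\boldsymbol{\theta}$ in the uncorrelated regime.
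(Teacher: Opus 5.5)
Your proof is correct and matches the paper's argument. Setting $\mathbf{R}_{\mathcal{S}}=\mathbf{I}_{M_{\mathrm{on}}}$ gives $\mathbf{A}=\boldsymbol{\Phi}$, unitarity of $\boldsymbol{\Phi}$ gives $\mathbf{C}=\mathbf{I}_{M_{\mathrm{on}}}$ with $q=1$, $\lambda_1=1$, $m_1=M_{\mathrm{on}}$, and substituting into Corollary~\ref{sec: corollary 2} finishes it; your extra checks via $H_1\equiv 1$ and via direct averaging over $T\sim\mathcal{G}(M_{\mathrm{on}},1)$ are also correct but not needed.
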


\begin{proof}
In the uncorrelated case, 
$\mathbf{R}_{\mathcal{S}}=\mathbf{I}_{M_{\mathrm{on}}}$, and therefore
$\mathbf{A}
= \mathbf{R}_{\mathcal{S}}^{1/2}\boldsymbol{\Phi}\mathbf{R}_{\mathcal{S}}^{1/2}
= \boldsymbol{\Phi}$. 
Since $\boldsymbol{\Phi}$ is unitary, 
$\mathbf{C}=\mathbf{A}\mathbf{A}^\text{H}
= \boldsymbol{\Phi}\boldsymbol{\Phi}^\text{H}
= \mathbf{I}_{M_{\mathrm{on}}}$. 
Thus, $\mathbf{C}$ has one distinct eigenvalue 
$\lambda_1 = 1$ with multiplicity $m_1 = M_{\mathrm{on}}$, meaning that all
eigenvalues are equal and appear $M_{\mathrm{on}}$ times.
This setting corresponds precisely to the equal-eigenvalue case addressed in 
Corollary~\ref{sec: corollary 2}. Hence, the PDF and CDF of $G_0$ follow directly 
from the general expressions in that corollary by substituting 
$\lambda_1 = 1$ and $m_1 = M_{\mathrm{on}}$, which yields
\eqref{eq:PDF_G0_independent} and \eqref{eq:CDF_G0_independent}. 
This completes the proof.
\end{proof}

It is important to emphasize that, although the intrinsically uncorrelated case (Corollary~\ref{sec: corollary 3}) arises as a special instance of the general correlated FRIS framework, this particular regime has been well studied in the literature. In fact, the PDF expression in \eqref{eq:PDF_G0_independent} exactly coincides with the one reported in \cite[eq. (7)]{Heliot2024}. This perfect agreement not only validates our analytical framework in the uncorrelated setting but also confirms that the proposed general solution seamlessly recovers all previously known special cases.

\subsection{Physical Interpretation and Operating Regimes}

The Theorem and its three Corollaries admit clear physical interpretations within the FRIS framework. Together, they characterize four distinct operating regimes of the FRIS-assisted channel, each governed by the eigenvalue structure of $\mathbf{C}=\mathbf{A}\mathbf{A}^{\mathrm H}$ and, consequently, by the underlying fluid-induced spatial configuration:

\begin{enumerate}
    \item \textit{Theorem -- General correlation (general multiplicities):}
This theorem provides the most general statistical characterization of the FRIS-assisted cascaded channel by allowing the composite matrix $\mathbf{C}$ to exhibit an arbitrary eigenvalue spectrum, including repeated eigenvalues with non-unit multiplicities $\left\{m_i\right\}_{i=1}^q$. Apart from positive semidefiniteness, no structural assumptions are imposed on the spatial correlation, making the result applicable to virtually all practical FRIS configurations arising from fluidic repositioning and active-set selection over a finite aperture.

From a physical standpoint, the eigenstructure of $\mathbf{C}$ yields a modal decomposition of the FRIS-induced spatial correlation. Each distinct eigenvalue $\lambda_i$ represents a correlation-supported spatial mode,\footnote{A \emph{spatial mode} is an eigen-direction of $\mathbf{C}=\mathbf{A}\mathbf{A}^{\mathrm H}$ corresponding to an independent spatial degree of freedom of the FRIS-assisted channel, with its eigenvalue quantifying the mode’s strength.} while its multiplicity $m_i$ quantifies the number of statistically equivalent degrees of freedom sharing the same modal gain. Repeated eigenvalues, therefore, reflect degeneracies arising from geometric symmetries, regular activation patterns, or clusters of elements experiencing similar propagation conditions. Such degeneracies are generally unavoidable in realistic deployments and must be explicitly accounted for. Consequently, this regime bridges the fully decorrelated and fully correlated extremes, capturing partial correlation suppression via fluidic repositioning through mixed eigenvalue multiplicities.

    % This regime represents the most general and practically relevant operating
    % condition, encompassing \emph{any} FRIS geometry or activation pattern. It
    % captures realistic non-uniform element spacing produced by fluidic
    % repositioning, heterogeneous element-to-UE/BS distances, partial (rather
    % than complete) correlation reduction, and spatially structured correlation
    % profiles (e.g., edge versus center elements). In real deployments, the
    % eigenvalues of $\mathbf{C}$ naturally exhibit arbitrary and often mixed
    % multiplicities, placing this scenario between the fully decorrelated and
    % fully correlated extremes.

    \item \textit{Corollary~\ref{sec: corollary 1} -- Effectively decorrelated (all eigenvalues simple):}
This corollary characterizes the operating regime in which the composite matrix $\mathbf{C}$ has a \emph{simple spectrum}, i.e., all eigenvalues are distinct and have unit multiplicity ($m_i=1$ for all $i$). Although the underlying channel vectors may still be spatially correlated, the absence of repeated eigenvalues implies that no correlation mode is shared by multiple statistically equivalent degrees of freedom.

Physically, this regime corresponds to a FRIS configuration in which fluidic repositioning and active-set selection sufficiently perturb the geometry so that the correlation structure decomposes into distinct spatial modes, each contributing through a unique effective gain. This does not require a diagonal correlation matrix or independent and identically distributed (i.i.d.) channel entries; it only requires the absence of degenerate eigenspaces. Accordingly, ``effective decorrelation'' is a statistical and operational notion rather than a strictly physical one: the channel remains spatially correlated, but this correlation is redistributed across distinct modes, yielding performance behavior that closely approximates that of uncorrelated channels while preserving a correlated physical structure.
    % This regime arises when the FRIS repositions its active elements such that
    % all spatial modes become distinct ($m_i=1$ for every $i$), thereby
    % suppressing residual spatial correlation. Although the elements remain
    % decorrelated, their associated spatial modes generally possess unequal
    % power levels (unequal eigenvalues). As a result, this regime yields strong,
    % but not necessarily optimal, diversity performance. It corresponds to the
    % best configuration achievable purely through fluidic geometry optimization.

    \item \textit{Corollary~\ref{sec: corollary 2} -- Fully correlated (equal eigenvalues):}
This corollary characterizes the extreme operating regime in which the composite matrix $\mathbf{C}$ admits a single distinct eigenvalue, i.e., $q=1$ with eigenvalue $\lambda_1$ of multiplicity $m_1$. In this case, all spatial degrees of freedom collapse into a single correlation-supported mode, reflecting a fully correlated FRIS configuration.
% and the quadratic form $T=\mathbf{g}_\text{u}^{\mathrm H}\mathbf{C}\mathbf{g}_\text{u}$ follows a Gamma distribution with shape parameter $m_1$ and scale parameter $\lambda_1$. 
% Consequently, the cascaded channel gain $G_0$ follows the classical $K$-distribution given in \eqref{eq:PDF-cor2-final}–\eqref{eq:CDF-cor2-final}.

From a physical perspective, the presence of a single eigenvalue implies maximal spatial correlation across the FRIS, as all active elements contribute through statistically indistinguishable propagation paths with identical correlation-induced power scaling. Such behavior arises when the FRIS geometry fails to create distinct spatial modes—for instance, when active elements experience nearly identical geometric conditions, are closely spaced relative to the wavelength, or when fluidic repositioning is restricted to a region smaller than the spatial coherence length.
As a result, fluidic repositioning offers no effective decorrelation gain in this regime, and performance is entirely governed by the single dominant spatial mode. This case, therefore, constitutes a natural performance lower bound, against which the benefits of partial or effective decorrelation achieved in the other regimes can be quantitatively assessed.
    
    % This regime represents the opposite extreme, in which all active elements
    % behave as if fully correlated, producing a single distinct eigenvalue with
    % multiplicity equal to the rank of $\mathbf{C}$. Such conditions arise under
    % highly correlated or rigid layouts---e.g., contiguous non-fluid RIS blocks
    % or poorly optimized FRIS activation patterns. It produces the most severe
    % fading conditions and therefore constitutes a performance lower bound
    % for FRIS operation.

    \item \textit{Corollary~\ref{sec: corollary 3} -- Intrinsically uncorrelated:}
This corollary describes the idealized operating regime in which the FRIS-assisted channel is intrinsically uncorrelated, meaning that the underlying channel coefficients are statistically independent by construction. In this case, the composite matrix reduces to $\mathbf{C}=\mathbf{I}_{M_{\mathrm{on}}}$, yielding a single eigenvalue $\lambda_1=1$ with multiplicity $M_{\mathrm{on}}$, and hence all spatial degrees of freedom are independent and identically distributed.

From a physical standpoint, intrinsic uncorrelation corresponds to a rich-scattering environment in which each active FRIS element experiences independent propagation conditions, with negligible mutual coupling and sufficiently large inter-element separation relative to the wavelength and coherence distance. This regime is not induced by fluidic repositioning, but rather assumed as an ideal baseline channel model.
Statistically, intrinsic uncorrelation constitutes a limiting case of the general framework and should not be confused with effective decorrelation. While both may exhibit similar performance trends, intrinsic uncorrelation implies a complete absence of spatial correlation at the channel level, whereas effective decorrelation arises from redistributing correlation across distinct spatial modes via FRIS geometry. Accordingly, this regime primarily serves as an analytical reference and upper-bound performance benchmark, rather than a practically attainable outcome of fluidic repositioning.
    
    % This special regime occurs when the FRIS elements are intrinsically
    % uncorrelated (i.e., $\mathbf{R}_{\mathcal{S}}=\mathbf{I}_{M_{\mathrm{on}}}$) due to their physical configuration or inherent properties.
    % Here, all elements experience identical large-scale conditions, no mutual
    % coupling, and fully isotropic responses. The composite matrix collapses to $\mathbf{C}=\mathbf{I}_{M_{\mathrm{on}}}$, yielding a single eigenvalue $\lambda_1=1$ with
    % multiplicity $M_{\mathrm{on}}$. This corresponds to the ideal independent
    % and identically distributed fading regime and yields the statistically
    % best performance for FRIS, matching a $K$-distribution with shape parameter
    % $M_{\mathrm{on}}$.

\end{enumerate}

Collectively, these four regimes span the full spectrum of statistical behaviors enabled by FRIS operation, ranging from the fully correlated case (worst case), through general mixed-correlation scenarios, to the effectively decorrelated regime, and ultimately to the intrinsically uncorrelated i.i.d.\ case (best case). This layered interpretation highlights how fluidic reconfiguration allows an FRIS to traverse these regimes by dynamically adjusting element placement and, consequently, the induced spatial correlation structure.

\subsection{Applicability to Conventional RIS Systems}

Although the analytical framework developed in this work is motivated by the fluid-reconfigurable architecture of FRIS, the results of the theorem and its corollaries apply directly to conventional RIS systems. When $M_{\mathrm{on}} = M$ (i.e., the FRIS and the RIS activate the same number of elements) and the element locations are fixed, the proposed framework naturally reduces to the standard RIS setting. Importantly, the formulations presented in the theorem and corollaries accommodate \emph{spatially correlated} RIS configurations---including fully correlated, partially correlated, effectively decorrelated, and uncorrelated cases---which commonly arise in practice due to finite inter-element spacing, mutual electromagnetic coupling, substrate interactions, and embedding effects.
In this scenario, inserting a general RIS spatial correlation matrix into \eqref{eq:A-theta} yields a matrix $\mathbf{A}$ whose composite matrix $\mathbf{C}$ exhibits an arbitrary eigenvalue structure. As a result, conventional correlated (or uncorrelated) RIS channels emerge as special cases of the broader FRIS analytical model. Therefore, the theorem and its corollaries also provide a complete and exact characterization of the cascaded channel in conventional RIS systems.
To the best of our knowledge, such exact closed-form expressions for correlated RIS channels---expressed as finite linear combinations of $K$-distributions---have not been previously reported in the RIS literature, which has predominantly relied on Gamma, generalized-$K$, or asymptotic/CLT-based approximations \cite{10275022,11152354,10683015,9424713,10164200}.

% \textit{Remark (Uniformly Random Phases):}
% When the FRIS phase shifts $\theta_\ell$ are independently and uniformly
% distributed over $[0,2\pi)$, the first-order statistics of the cascaded channel
% gain $G_0$---namely, its PDF and CDF as derived in the Theorem and corresponding
% corollaries---remain identical to those obtained under deterministic phase
% configurations.
% This invariance stems from the intrinsic structure of the FRIS-assisted channel.
% Under circularly symmetric complex Gaussian fading, multiplying any entry of
% $\mathbf{g}_\text{u}$ or $\mathbf{g}_\text{f}$ by a unit-modulus complex exponential
% $e^{\ii\theta_\ell}$, with $\theta_\ell \sim \mathcal{U}[0,2\pi)$, does not alter
% the joint distribution of these vectors. Consequently, the distribution of the
% quadratic form $
% G_0 = \big|\mathbf{g}_\text{u}^\text{H}\mathbf{A}\mathbf{g}_\text{f}\big|^{2}$
% is invariant to uniformly random phase rotations introduced through
% $\boldsymbol{\Phi}$.
% As a result, deterministic and uniformly random phase configurations yield
% identical PDF, CDF, OP, and EC expressions. This theoretical insight is further
% corroborated by the numerical results in Section~\ref{sec: Numerical Results}.

\section{Performance Analysis}
\label{sec: Performance Analysis}

In this section, we assess the performance of the FRIS-aided wireless
communication system described in Section~\ref{sec: System Model} in an exact
and asymptotic manner. 
Specifically, we derive the exact and
asymptotic expressions for the system's OP for each of the operating regimes, as well as exact expressions for the EC.

\subsection{Outage Probability}

\subsubsection{General-Multiplicity Case}

For the general-multiplicity case, the exact OP follows directly by substituting \eqref{eq:CDF_G0} into \eqref{eq:OP-final}, yielding
\begin{align}
\label{eq:OP-final-general}
P_{\mathrm{out}}
    = \sum_{i=1}^{q} \sum_{k=1}^{m_i}
      c_{i,k}\,
      P_{\mathrm{out}}^{K}\!\left(k,\lambda_i,\tilde{R}\right),
\end{align}
where $P_{\mathrm{out}}^{K}(k,\lambda_i,\tilde{R})$ denotes the OP of a single
$K$-distributed random variable with shape parameter $k$ and scale parameter
$\lambda_i$, given by
\begin{align}
\label{eq:OP-K-general}
P_{\mathrm{out}}^{K}\!\left(k,\lambda_i,\tilde{R}\right)
    = 1 - \frac{2}{\Gamma(k)}
      \left(\frac{\tilde{R}}{\lambda_i}\right)^{\frac{k}{2}}
      K_{k}\!\left(2\sqrt{\frac{\tilde{R}}{\lambda_i}}\right).
\end{align}
Hence, the exact OP is expressed as a weighted linear combination of the OPs associated with the individual $K$-distributed components induced by the eigenvalue multiplicities.

We now investigate the high-SNR regime, i.e., $\bar{\gamma} \to \infty$, or equivalently $\tilde{R} \to 0$. In the subsequent derivations, we retain only the dominant (leading-order) terms that determine the asymptotic scaling behavior at high SNR. To this end, we employ the small-argument expansion of the modified Bessel function of the second kind, $K_\nu(z)$, as $z \to 0$, for integer order $\nu \in \mathbb{N}$~\cite{NIST_DLMF}:
\begin{align}
\label{eq:K1-small}
K_1(z) &\simeq \frac{1}{z} + \frac{z}{2} \log\!\left(\frac{z}{2}\right), 
\\ \label{eq:Kk-small}
K_k(z) &\simeq \frac{1}{2}(k-1)! \left(\frac{2}{z}\right)^{k}
-\frac{1}{2}(k-2)! \left(\frac{z}{2}\right)^{2-k}, \quad k \geq 2.
\end{align}

Substituting \eqref{eq:K1-small} into \eqref{eq:OP-K-general} yields
\begin{align}
P_{\mathrm{out}}^{K}(1,\lambda_i,\tilde{R})
&\simeq
-\frac{\tilde{R}}{\lambda_i}
\log\!\left(\frac{\tilde{R}}{\lambda_i}\right),
\label{eq:asympt-PoutK-k1}
\end{align}
whereas substituting \eqref{eq:Kk-small} into \eqref{eq:OP-K-general} gives, for $k\geq2$,
\begin{equation}
P_{\mathrm{out}}^{K}(k,\lambda_i,\tilde{R})
\simeq
\frac{1}{k-1}
\left(\frac{\tilde{R}}{\lambda_i}\right).
\label{eq:asympt-PoutK-kge2}
\end{equation}

Using \eqref{eq:SNR-rate}, \eqref{eq:asympt-PoutK-k1}, and \eqref{eq:asympt-PoutK-kge2} in \eqref{eq:OP-final-general}, and after straightforward algebraic manipulations, the asymptotic OP can be expressed as
\begin{align}
\nonumber P_{\mathrm{out}}
&\simeq
\frac{2^{R_0}-1}{L_\text{f} L_\text{u} \bar{\gamma}} \\
& \times
\Bigg[
S_1 \log\!\left( \bar{\gamma} \right)
- S_1 \log\!\left( \frac{2^{R_0}-1}{L_\text{f} L_\text{u}} \right)
+ S_2 + S_3
\Bigg],
\label{eq:Pout-highSNR general}
\end{align}
where
\begin{subequations}\label{eq:S_terms}
\begin{align}
S_1 &= \sum_{i=1}^{q} \frac{c_{i,1}}{\lambda_i}, \label{eq:S1}\\
S_2 &= \sum_{i=1}^{q} \frac{c_{i,1}\log(\lambda_i)}{\lambda_i}, \label{eq:S2}\\
S_3 &= \sum_{i=1}^{q}\sum_{k=2}^{m_i}
       \frac{c_{i,k}}{(k-1)\,\lambda_i}. \label{eq:S3}
\end{align}
\end{subequations}

% Finally, the diversity order is obtained as
% \begin{equation}
% d
% = -\lim_{\bar{\gamma}\to\infty}
% \frac{\log (P_{\mathrm{out}})}{\log (\bar{\gamma})}
% = 1.
% \end{equation}

\subsubsection{Simple-Eigenvalue Case}

For the simple-eigenvalue case ($m_i = 1$ for all $i$), each eigenvalue contributes a single $K$-distributed component. Substituting \eqref{eq:CDF_G0-simple} into \eqref{eq:OP-final} yields the exact OP as
\begin{align}
\label{eq:OP-simple-mixture}
P_{\mathrm{out}}
    = \sum_{i=1}^{q}
      c_{i,1}\,
      P_{\mathrm{out}}^{K}\!\left(1,\lambda_i,\tilde{R}\right),
\end{align}
where the OP of a single $K$-distributed random variable with shape parameter $k=1$ and scale parameter $\lambda_i$ is given by
\begin{align}
\label{eq:OP-simple-K}
P_{\mathrm{out}}^{K}\!\left(1,\lambda_i,\tilde{R}\right)
    = 1 -
      2\sqrt{\frac{\tilde{R}}{\lambda_i}}\,
      K_{1}\!\left(2\sqrt{\frac{\tilde{R}}{\lambda_i}}\right).
\end{align}

By substituting \eqref{eq:SNR-rate} and the small-argument expansion in \eqref{eq:K1-small} into \eqref{eq:OP-simple-mixture}, and following steps analogous to those leading to \eqref{eq:Pout-highSNR general}, the asymptotic OP for the simple-eigenvalue case can be expressed as
\begin{align}
P_{\mathrm{out}}
&\simeq
\frac{2^{R_0}-1}{L_\text{f} L_\text{u} \bar{\gamma}}
\Bigg[
S_1 \log\!\left( \bar{\gamma} \right)
- S_1 \log\!\left( \frac{2^{R_0}-1}{L_\text{f} L_\text{u}} \right)
+ S_2
\Bigg],
\label{eq:Pout-highSNR-simple}
\end{align}
where $S_1$ and $S_2$ are defined in \eqref{eq:S1} and \eqref{eq:S2}, respectively. 
% Consequently, the diversity order is given by
% \begin{align}
% d
% &=
% -\lim_{\bar{\gamma}\to\infty}
% \frac{\log (P_{\mathrm{out}})}{\log (\bar{\gamma})}
% = 1.
% \end{align}

\subsubsection{Equal-Eigenvalue Case}

For the equal-eigenvalue case ($q=1$ and $m_1 = r$), the cascaded channel gain reduces to that of a single $K$-distributed random variable with shape parameter $m_1$ and scale parameter $\lambda_1$. Accordingly, substituting \eqref{eq:CDF-cor2-final} into \eqref{eq:OP-final} yields the exact OP as
\begin{align}
\label{eq:OP-final-equal}
P_{\mathrm{out}}
    = 1 -
      \frac{2}{\Gamma(m_1)}\,
      \lambda_1^{-\frac{m_1}{2}}\,
      \tilde{R}^{\frac{m_1}{2}}\,
      K_{m_1}\!\left(2\sqrt{\frac{\tilde{R}}{\lambda_1}}\right).
\end{align}

Replacing \eqref{eq:SNR-rate} and using the small-argument expansion in \eqref{eq:Kk-small}, the OP admits the following asymptotic form:
\begin{equation}
P_{\mathrm{out}}
\simeq
\frac{2^{R_0}-1}{
(m_1 - 1)\,\lambda_1\,L_\text{f} L_\text{u}\,\bar{\gamma}},
\label{eq:OP-equal-asymptotic-final}
\end{equation}
which holds for $m_1 \geq 2$. 
% Consequently, the diversity order is given by
% \begin{align}
% d
% &=
% -\lim_{\bar{\gamma}\to\infty}
% \frac{\log (P_{\mathrm{out}})}{\log (\bar{\gamma})}
% = 1.
% \end{align}

\subsubsection{Uncorrelated Case}

For the uncorrelated case, the composite matrix reduces to $\mathbf{C}=\mathbf{I}_{M_{\mathrm{on}}}$, and the cascaded channel gain $G_0$ follows a $K$-distribution with shape parameter $M_{\mathrm{on}}$ and unit scale parameter. Substituting \eqref{eq:CDF_G0_independent} into \eqref{eq:OP-final} yields the exact OP as
\begin{align}
\label{eq: OP final uncorrelated}
P_{\mathrm{out}}
    =
    1 -
    \frac{2}{\Gamma(M_{\mathrm{on}})}
    \, \tilde{R}^{\frac{M_{\mathrm{on}}}{2}}\,
    K_{M_{\mathrm{on}}}\!\left(2\sqrt{\tilde{R}}\right).
\end{align}

Replacing \eqref{eq:SNR-rate} and using the small-argument expansion in \eqref{eq:Kk-small} in \eqref{eq: OP final uncorrelated}, the OP admits the following asymptotic form:
\begin{equation}
P_{\mathrm{out}}
\simeq
\frac{2^{R_0}-1}{
(M_{\mathrm{on}} - 1)\,L_\text{f} L_\text{u}\,\bar{\gamma}},
\label{eq:OP-uncorr-asymptotic-final}
\end{equation}
which holds for $M_{\mathrm{on}} \geq 2$. 
% Thus, the diversity order is given by
% \begin{align}
% d
% &=
% -\lim_{\bar{\gamma}\to\infty}
% \frac{\log (P_{\mathrm{out}})}{\log (\bar{\gamma})}
% = 1.
% \end{align}

\subsubsection{Diversity Order}
% The diversity order is determined by the dominant term of the mixture distribution in the small-argument regime (as $R\to 0$).
For all four cases considered above, the diversity order $d$ is invariant and is given by
\begin{equation}
d
= -\lim_{\bar{\gamma}\to\infty}
\frac{\log (P_{\mathrm{out}})}{\log (\bar{\gamma})}
= 1.
\end{equation}

It is worth emphasizing that this result captures the \emph{true} diversity behavior of the FRIS-assisted cascaded channel under exact statistical characterization. Specifically, the diversity order is fundamentally equal to one, regardless of eigenvalue multiplicities, spatial correlation regimes, or fluidic reconfiguration, i.e., spatial correlation and the choice of $\boldsymbol{\Phi}$ only affect the mixture coefficients of the distribution, while the high-SNR exponent—and thus the diversity order---remains unchanged, yielding $d = 1$.
This finding refines the diversity behavior reported in~\cite[Corollary~1]{11154019Ghadi}, where the diversity order was inferred from a Gamma approximation and consequently tied to the shape parameter of the 
approximating distribution rather than to the exact asymptotic structure of the cascaded channel.

\subsection{Ergodic Capacity}

\subsubsection{General-Multiplicity Case}
For the general-multiplicity case, the exact EC is given by (see Appendix~\ref{app: ergodic capacity})
\begin{equation}
\bar C
= \frac{1}{\ln (2) }
\sum_{i=1}^{q}\sum_{k=1}^{m_i}
\frac{c_{i,k}}{\Gamma(k)}\,
G_{4,2}^{1,4}\!\left(
\bar{\gamma} L_\text{f} L_\text{u} \lambda_i\;\Bigg|\;
\begin{matrix}
1-k,\;0,\;1,\;1\\
1,\;0
\end{matrix}
\right).
\label{eq:C_total_meijerg}
\end{equation}

\subsubsection{Simple-Eigenvalue Case} For the simple-eigenvalue case, the exact EC is given by (see Appendix~\ref{app: ergodic capacity})
\begin{equation}
\bar C
=
\frac{1}{\ln (2)} \sum_{i=1}^{q}
c_{i,1}\,
G_{4,2}^{1,4}\!\left(
\bar{\gamma} L_\text{f} L_\text{u} \lambda_i\;\Bigg|\;
\begin{matrix}
0,\;0,\;1,\;1\\
1,\;0
\end{matrix}
\right).
\label{eq:C_total_meijerg_simple}
\end{equation}

\subsubsection{Equal-Eigenvalue Case}

For the equal-eigenvalue case, the exact EC is given by (see Appendix~\ref{app: ergodic capacity})
\begin{equation}
\label{eq:EC-final-equal}
\bar C
=
\frac{1}{\ln (2)\,\Gamma(m_1)}\,
G_{4,2}^{1,4}\!\left(
\bar{\gamma} L_\text{f} L_\text{u} \lambda_1\;\Bigg|\;
\begin{matrix}
1-m_1,\;0,\;1,\;1\\
1,\;0
\end{matrix}
\right).
\end{equation}

\subsubsection{Uncorrelated Case}
For the uncorrelated case, the exact EC is given by (see Appendix~\ref{app: ergodic capacity})
\begin{equation}
\label{eq:EC-final-uncorrelated}
\bar C
=
\frac{1}{\ln (2) \Gamma(M_{\rm on})}\,
G_{4,2}^{1,4}\!\left(
\bar{\gamma} L_\text{f} L_\text{u} \;\Bigg|\;
\begin{matrix}
1-M_{\rm on},\;0,\;1,\;1\\
1,\;0
\end{matrix}
\right).
\end{equation}

To the best of the author's knowledge, the closed-form expressions for the OP and EC derived in this work have not been previously reported in the FRIS or (correlated) RIS literature.

\section{Numerical Results and Discussion}
\label{sec: Numerical Results}

This section validates and corroborates all the proposed analytical expressions through extensive Monte Carlo (MC) simulations. In addition, as a theoretical accuracy benchmark, we adopt the Gamma approximation reported in \cite{11154019Ghadi}. Finally, to clearly demonstrate the performance advantages of the FRIS architecture, we include a comparative analysis against a conventional RIS system characterized by fixed, contiguous reflecting elements.

% and to evaluate the performance of the FRIS-assisted communication
% system in comparison with a conventional RIS-assisted system.
% The performance is assessed in terms of the outage probability, the
% probability density function of the effective cascaded channel gain, and the ergodic capacity.

The simulation parameters were set as follows.
A uniform planar array with $M_x=M_z=20$ elements was considered, with
$M_{\mathrm{on}}\in\{25,36\}$ active reflecting elements.
The normalized inter-element spacing was set to $d_w=0.15$ (dimensionless),
and the carrier wavelength was $\lambda_c=0.125\,\mathrm{m}$, corresponding to a carrier
frequency of approximately $2.4\,\mathrm{GHz}$.

The \ac{BS} and \ac{UE} distances to the FRIS/RIS are $20$\,m and $40$\,m, respectively.
% The \ac{BS} and the \ac{UE} were located at
% $\mathbf{b}_s = [-20,\,0]$\,m and $\mathbf{u}_e = [40,\,0]$\,m \changed{[Why are the distance in 2-dimensional coordinates? So, where is the origin [0,0]? Is this correct? 
% Is not just $\mathbf{b}_s=20$ m and $\mathbf{u}_e=40$ m?]},
% respectively.
Large-scale fading was modeled using a scalar pathloss formulation with
pathloss exponents $\alpha_\text{f} = \alpha_\text{u} = 2.1$, reference gains
$\rho_\text{f} = \rho_\text{u} = 10$, and reference distance $d_0 = 1$\,m.
% \paragraph*{Phase invariance of outage statistics}
% Although the Random phase configuration  explicitly averages over 
% independent random phase realizations and the Deterministic phase configuration uses a 
% single fixed phase configuration, both approaches produce identical 
% outage curves.
% This occurs because, under circularly symmetric complex Gaussian fading,
% the distribution of 
% $G_0 = |\mathbf{g}_\text{u}^\text{H}\mathbf{A}\mathbf{g}_\text{f}|^{2}$
% is invariant to the uniformly distributed phases in
% $\boldsymbol{\Phi}$.
% Therefore, once $\mathbf{C} = \mathbf{A}\mathbf{A}^\text{H}$ is normalized
% to have $\mathrm{tr}(\mathbf{C})=1$, the choice between fixed and random 
% phases does not affect the statistics of $G_0$.
% All differences observed between RIS and FRIS thus originate from the 
% spatial-correlation structure, not from the phase configuration.
% The FRIS activation pattern was generated using the
% correlation selection rule described in
% Section~\ref{activation}, with an initial threshold
% $\tau_{\text{init}} = 0.3$.
% After relaxation, the final correlation cap was
% $\tau_{\text{used}} = 0.421$, producing a maximum pairwise correlation
% of $|J_0 (\cdot)|_{\max} = 0.402$ among active FRIS elements.
% For the same number of active elements, the contiguous RIS block
% exhibited a significantly higher maximum correlation of
% $|J_0 (\cdot)|_{\max} = 0.79$.
For simulation purposes, the reflection phases ${\theta_\ell}$ are generated once by sampling from $\mathcal{U}[0,2\pi)$ and subsequently kept fixed. This choice is purely a numerical initialization strategy to obtain a generic phase configuration and should not be interpreted as modeling random or time-varying phase shifts.
The target rate was fixed at $R_0 = 0.1$\,bps/Hz.
% The SNR was swept from $0$\,dB to $80$\,dB, and the target rate was fixed at $R_0 = 0.1$\,bps/Hz.
All MC-based results were averaged over $3\times 10^{6}$ independent fading realizations to ensure statistical
convergence.

\subsection{Element Activation Patterns for RIS and FRIS}
To clearly differentiate the baseline RIS from the proposed FRIS, two distinct element-activation policies are considered.
In the conventional RIS, $M_r$ reflecting elements are activated as a contiguous block centered on the surface, with $M_r = 25$ and $M_r = 36$, as illustrated in Figs.~\ref{fig:elements}-(a) and \mbox{\ref{fig:elements}-(c)}, respectively.
In contrast, the FRIS activation patterns are generated using the correlation-based selection rule described in Section~\ref{activation}, with $M_{\mathrm{on}} = 25$ and $M_{\mathrm{on}} = 36$, as shown in Figs.~\ref{fig:elements}-(b) and~\ref{fig:elements}-(d), respectively. This approach spatially distributes the active elements to mitigate inter-element correlation, starting from an initial threshold $\tau_{\text{init}} = 0.3$.
After the relaxation procedure, the final correlation cap is $\tau_{\text{used}} = 0.421$, yielding a maximum pairwise correlation of $|J_0(\cdot)|_{\max} = 0.402$ among the active FRIS elements.
For the same number of active elements, the contiguous RIS configuration exhibits a significantly higher maximum correlation of $|J_0(\cdot)|_{\max} = 0.79$.
These activation patterns are used to generate the subsequent PDF, OP, and EC figures.

% the FRIS activates $M_{\mathrm{on}}$ elements according to the correlation-threshold criterion, which spatially distributes the active elements to mitigate inter-element correlation. 
% Figure~\ref{fig:elements} illustrates the resulting activation patterns for $M_r = M_{\mathrm{on}} \in \{25,36\}$.

\begin{figure}[t!]
\centering
\includegraphics[width=0.95\linewidth]{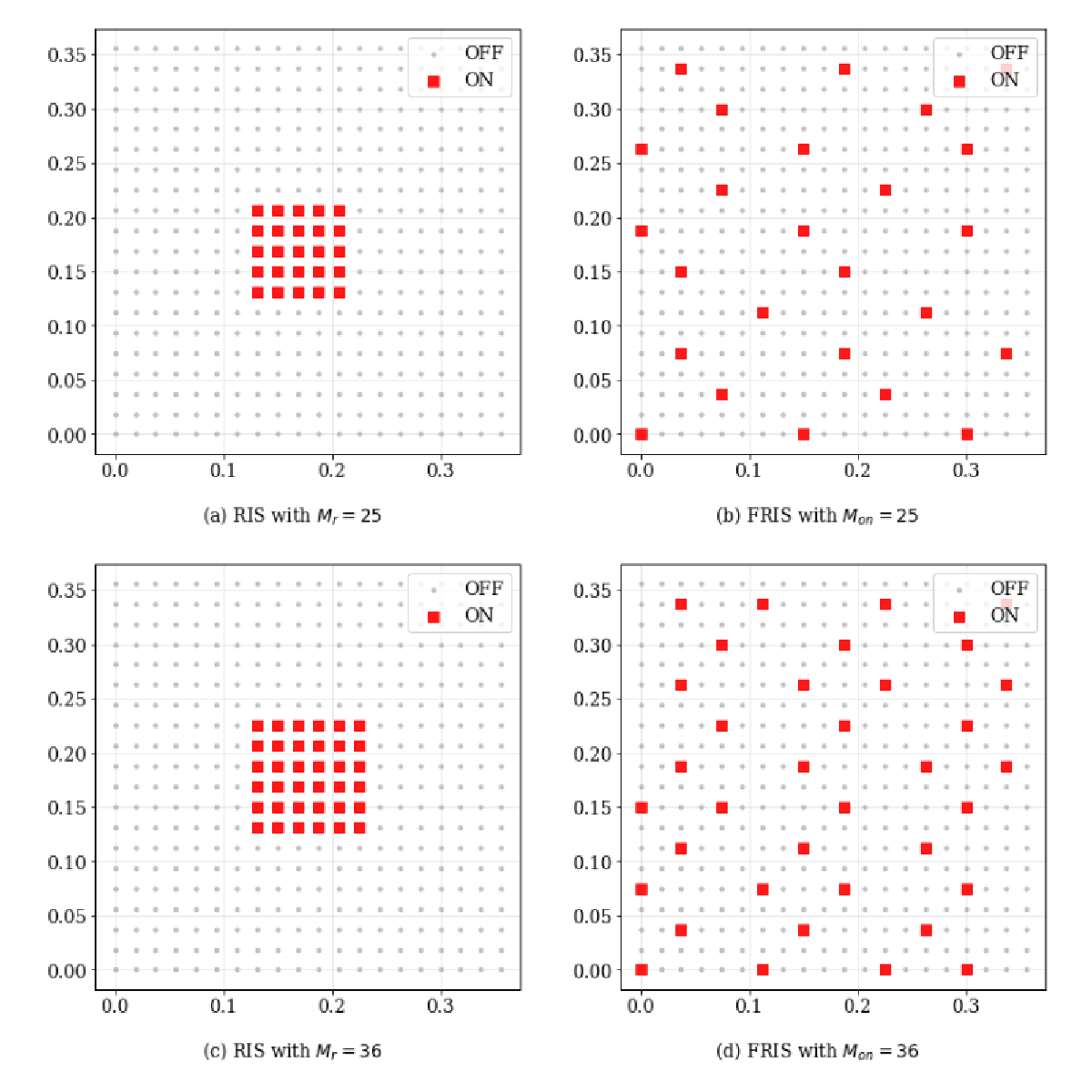}
\caption{Element activation patterns for RIS and FRIS.}
\label{fig:elements}
\end{figure}

\subsection{PDF Discussion}

Figs.~\ref{eq: PDF 25} and~\ref{eq: PDF 36} depict the PDF of the end-to-end cascaded channel gain, $G_0$, for different FRIS and RIS configurations. Specifically, Fig.~\ref{eq: PDF 25} considers $M_r = M_{\mathrm{on}} = 25$ active elements, whereas Fig.~\ref{eq: PDF 36} corresponds to $M_r = M_{\mathrm{on}} = 36$ active elements.
For both configurations, the proposed exact closed-form expressions exhibit excellent agreement with MC simulations, thereby validating the correctness and accuracy of the analytical derivations.
% In contrast, the Gamma-approximation PDF in \cite[eq.~(7)]{11154019Ghadi} exhibits a noticeable loss of accuracy in both the left and right tails of the distribution in both figures. These tail regions are of paramount importance for evaluating performance metrics in wireless communications (e.g., OP and ABER) and radar systems (e.g., probability of false alarm). While the Gamma approximation provides a reasonable fit around the main body (peak) of the distribution, it fails to accurately capture the tail behavior.
In contrast, although the Gamma-approximation PDF in \cite[eq.~(7)]{11154019Ghadi} provides a reasonable fit around the main body (peak) of the distribution, it fails to faithfully capture the behavior of both the left and right tails.\footnote{These tail regions are of paramount importance for evaluating performance metrics in wireless communications (e.g., OP and ABER) and radar systems (e.g., probability of false alarm), as such metrics are inherently dominated by rare but critical events.} This limitation stems from the fact that the Gamma approximation is constructed by matching only low-order moments, which is insufficient to accurately characterize extreme channel fluctuations~\cite{GarciaTVTris}.
This loss of tail accuracy is more pronounced in the conventional RIS scenario, where larger discrepancies---particularly in the left tail---are observed. This behavior can be attributed to stronger spatial correlation among contiguous reflecting elements and the resulting increased eigenvalue dispersion, effects that are effectively mitigated by the FRIS activation strategy.

% configurations, respectively. Monte Carlo simulations are compared against the proposed exact analytical PDF and a Gamma moment-matching approximation. For both configurations, the exact analytical expression exhibits excellent agreement with Monte Carlo results over the entire range of $g$, thereby validating the correctness of the derived Bessel--K model.

% A clear difference is observed in the accuracy of the Gamma approximation. In the FRIS case Fig.\ref{fris:pdf}, the Gamma-based PDF closely follows the Monte Carlo curve, with only minor deviations in the extreme low-gain region. This behavior is attributed to the reduced spatial correlation induced by the correlation-threshold selection, which leads to a more concentrated eigenvalue spectrum of $\mathbf{A}\mathbf{A}^\text{H}$. In contrast, for the conventional RIS configuration Fig.\ref{ris:pdf}, the Gamma approximation exhibits larger discrepancies, particularly in the left tail, due to stronger spatial correlation and increased eigenvalue dispersion. These results indicate that FRIS not only improves channel decorrelation but also enables simpler and more accurate statistical modeling.

\begin{figure}[t!]
 \centering
 \includegraphics[width=0.8\linewidth]{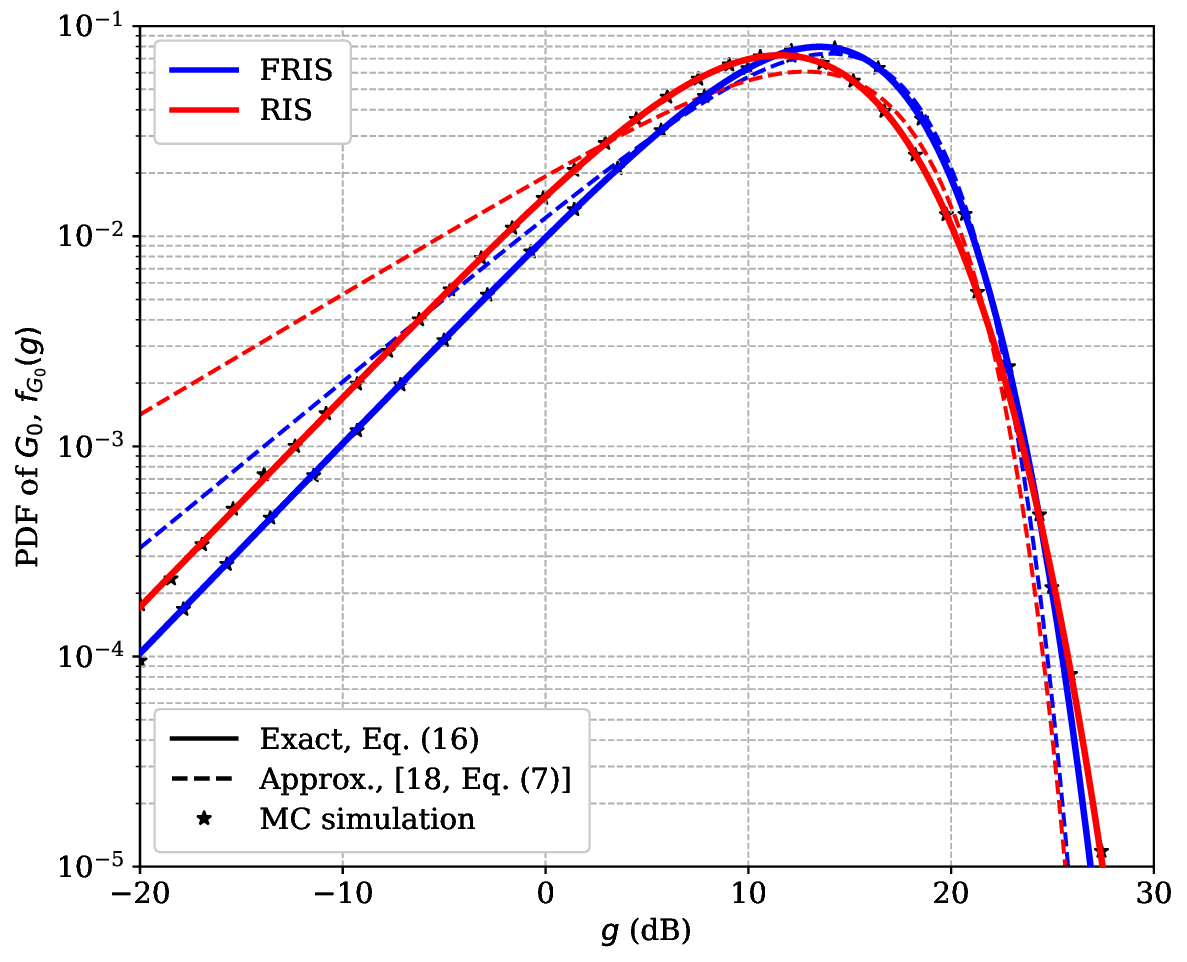}
 \caption{PDF of $G_0$ for different FRIS and RIS configurations with $ M_{\mathrm{on}} = 25$.}
 \label{eq: PDF 25}
 \end{figure}

 \begin{figure}[t!]
 \centering
 \includegraphics[width=0.8\linewidth]{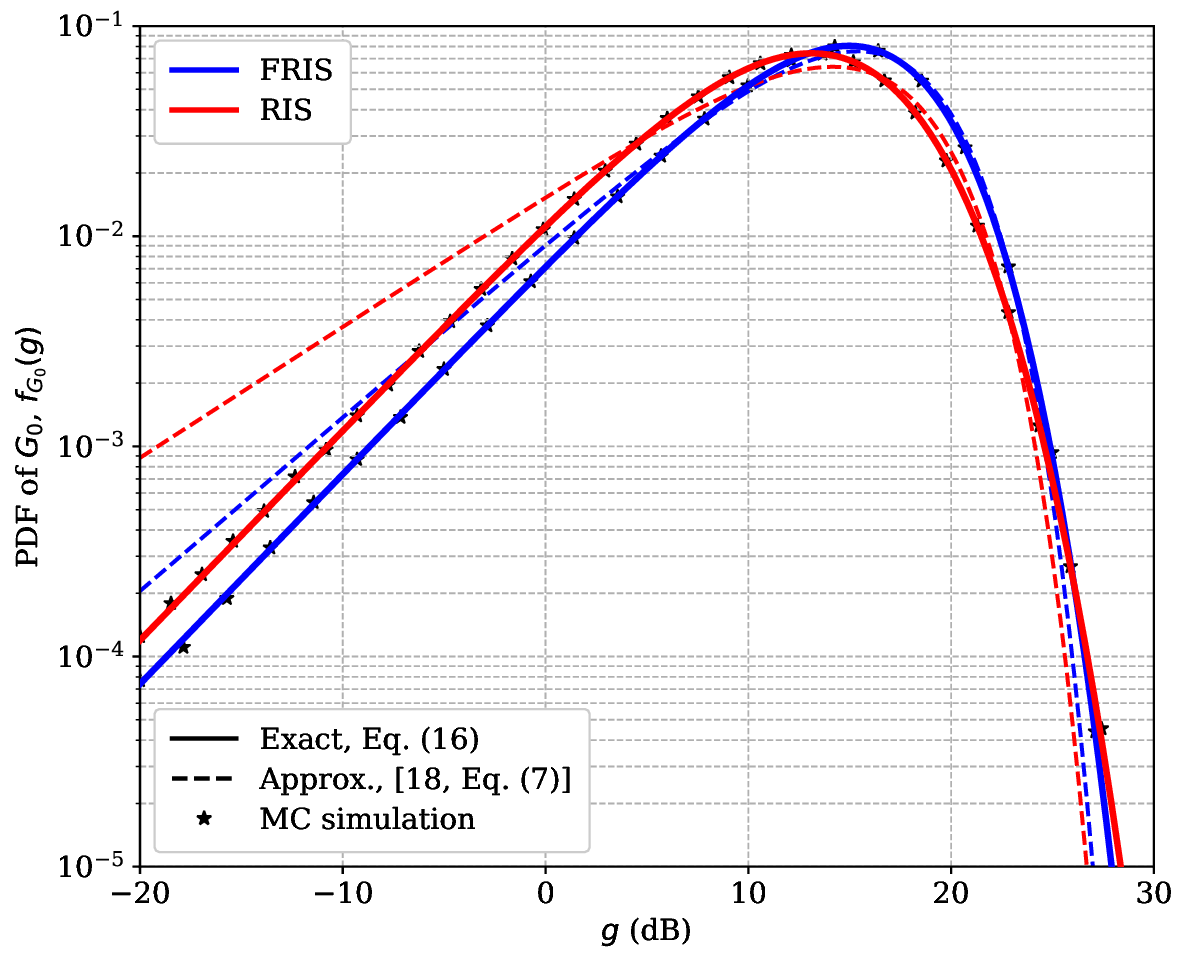}
 \caption{PDF of $G_0$ for different FRIS and RIS configurations with $M_{\mathrm{on}} = 36$.}
 \label{eq: PDF 36}
 \end{figure}

\subsection{Outage Probability Discussion}

Figs.~\ref{fig:asymptotic25} and~\ref{fig:asymptotic36} illustrate the OP as a function of the average SNR, $\bar{\gamma}$, for FRIS- and RIS-assisted systems with $M_r = M_{\mathrm{on}} = 25$ and $M_r = M_{\mathrm{on}} = 36$ reflecting elements, respectively. 
In both figures, the proposed exact analytical expressions exhibit perfect agreement with the MC simulation results, thereby corroborating the correctness and accuracy of the analytical derivations.
On the other hand, the Gamma approximation in \cite[eq.~(20)]{11154019Ghadi} exhibits noticeable accuracy discrepancies, yielding overly conservative estimates. These discrepancies become more pronounced in the high-SNR regime and are particularly severe for the conventional RIS scenario, where stronger spatial correlation among contiguous elements leads to more frequent deep fades. This behavior is consistent with the observations drawn from the left tails of the PDFs of $G_0$ in Figs.~\ref{eq: PDF 25} and~\ref{eq: PDF 36}, which dominate the OP performance at high SNR and are therefore critical to accurately capture.
Furthermore, it is observed that the FRIS-assisted system consistently outperforms the conventional RIS-assisted counterpart, yielding a moderate reduction in OP for a given average SNR. This performance advantage stems from the ability of FRIS to spatially distribute the active elements and thereby suppress inter-element correlation. Increasing the number of active elements---from $25$ to $36$ in this study---further enhances system performance by improving the effective channel strength and reducing the probability of outage.

The asymptotic curves provide an additional and relevant insight. Specifically, the \emph{true} diversity order is not affected by spatial correlation or by fluidic reconfiguration; it remains constant and equal to one, as rigorously confirmed by the proposed analytical expressions. In contrast, the diversity order reported in \cite[Corollary~1]{11154019Ghadi} depends explicitly on the correlation matrix. As a result, modifying the positions and correlation structure of the active elements alters the predicted diversity order, leading to different OP slopes, as observed in Figs.~\ref{fig:asymptotic25} and~\ref{fig:asymptotic36}.

\subsection{Ergodic Capacity Discussion}

Figs.~\ref{fig: EC 25} and~\ref{fig: EC 36} illustrate the EC as a function of the average SNR, $\bar{\gamma}$, for FRIS- and RIS-assisted systems with $M_r = M_{\mathrm{on}} = 25$ and $M_r = M_{\mathrm{on}} = 36$ reflecting elements, respectively. To compute the EC reported in \cite[eq.~(23)]{11154019Ghadi}, a numerical integration approach is employed.
The MC simulation results exhibit perfect agreement with the proposed exact formulations, thus ensuring their validity and accuracy. 
Conversely, in both figures, it is observed that the Gamma approximation in \cite[eq ~(23)]{11154019Ghadi} exhibits minor accuracy discrepancies and yields slightly conservative estimates. As before, the largest deviations occur in the RIS scenario. This behavior can be explained by the fact that the EC depends on the entire distribution profile, rather than being dominated by the extreme tails, as is the case for OP or ABER. Consequently, moderate inaccuracies in the tail regions have a limited impact on the EC, making the Gamma approximation a reliable and computationally efficient alternative for EC evaluation in both RIS- and FRIS-assisted wireless systems.

Finally, Figs.~\ref{fig: EC 25} and~\ref{fig: EC 36} clearly demonstrate that the FRIS-assisted system consistently outperforms the conventional RIS-assisted counterpart. Moreover, increasing the number of active elements leads to further performance improvements, as reflected by higher EC values, in line with theoretical expectations.

\begin{figure}[t!]
\centering
\includegraphics[width=0.8\linewidth]{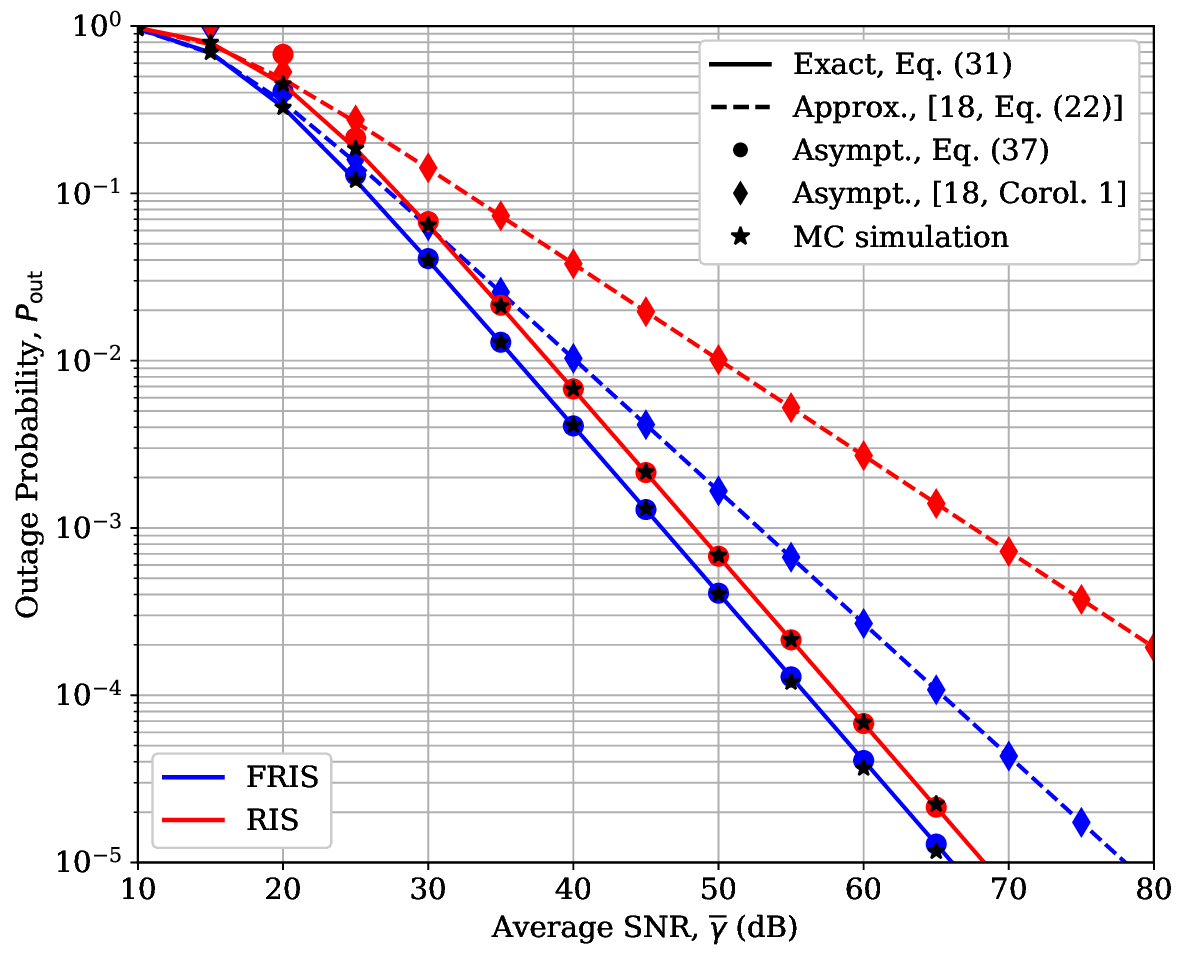}
\caption{OP versus the average SNR, $\bar{\gamma}$, for different FRIS and RIS configurations with $M_{\mathrm{on}} = 25$.}
\label{fig:asymptotic25}
\end{figure}

\begin{figure}[t!]
\centering
\includegraphics[width=0.8\linewidth]{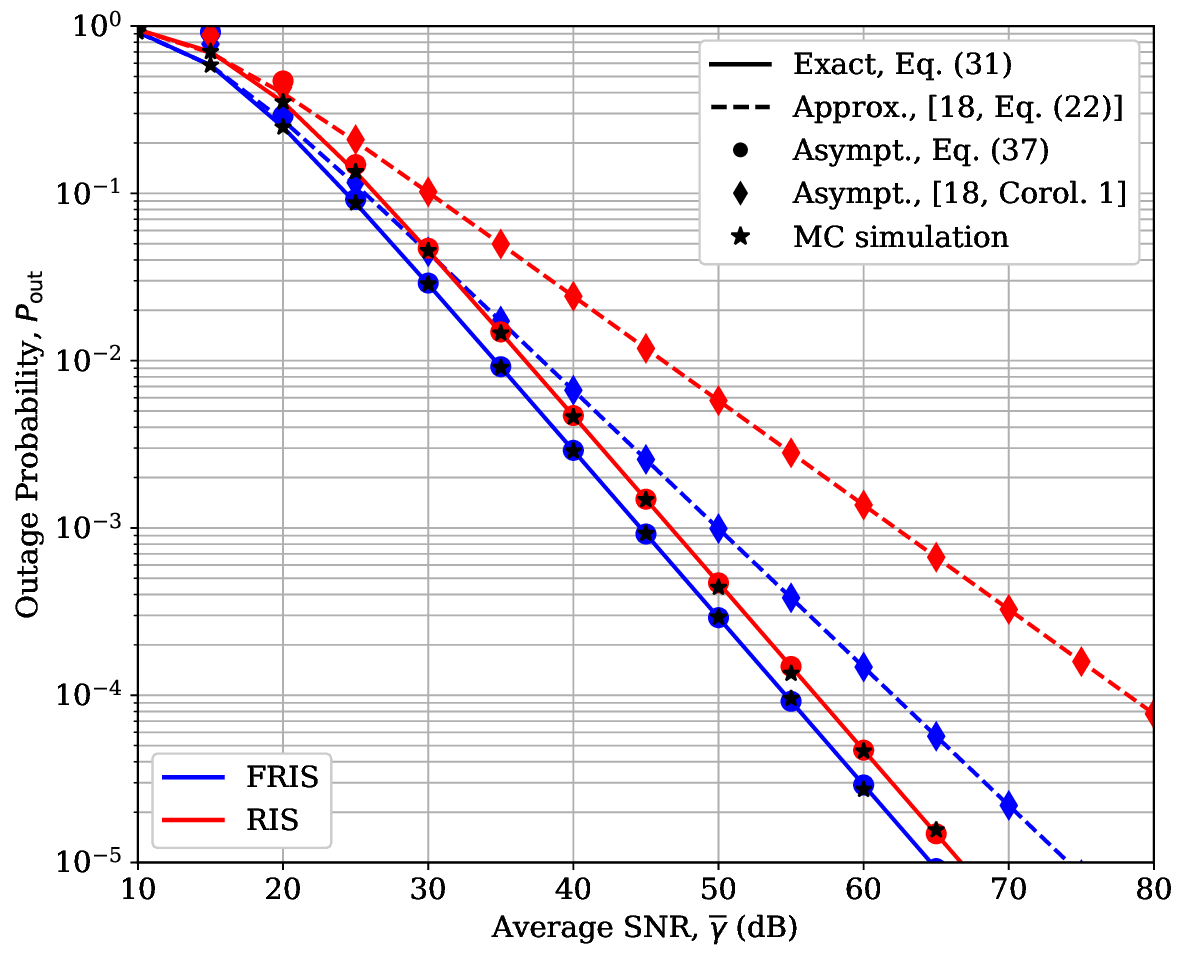}
\caption{OP versus the average SNR, $\bar{\gamma}$, for different FRIS and RIS configurations with $M_{\mathrm{on}} = 36$.}
\label{fig:asymptotic36}
\end{figure}

\begin{figure}[t!]
\centering
\includegraphics[width=0.8\linewidth]{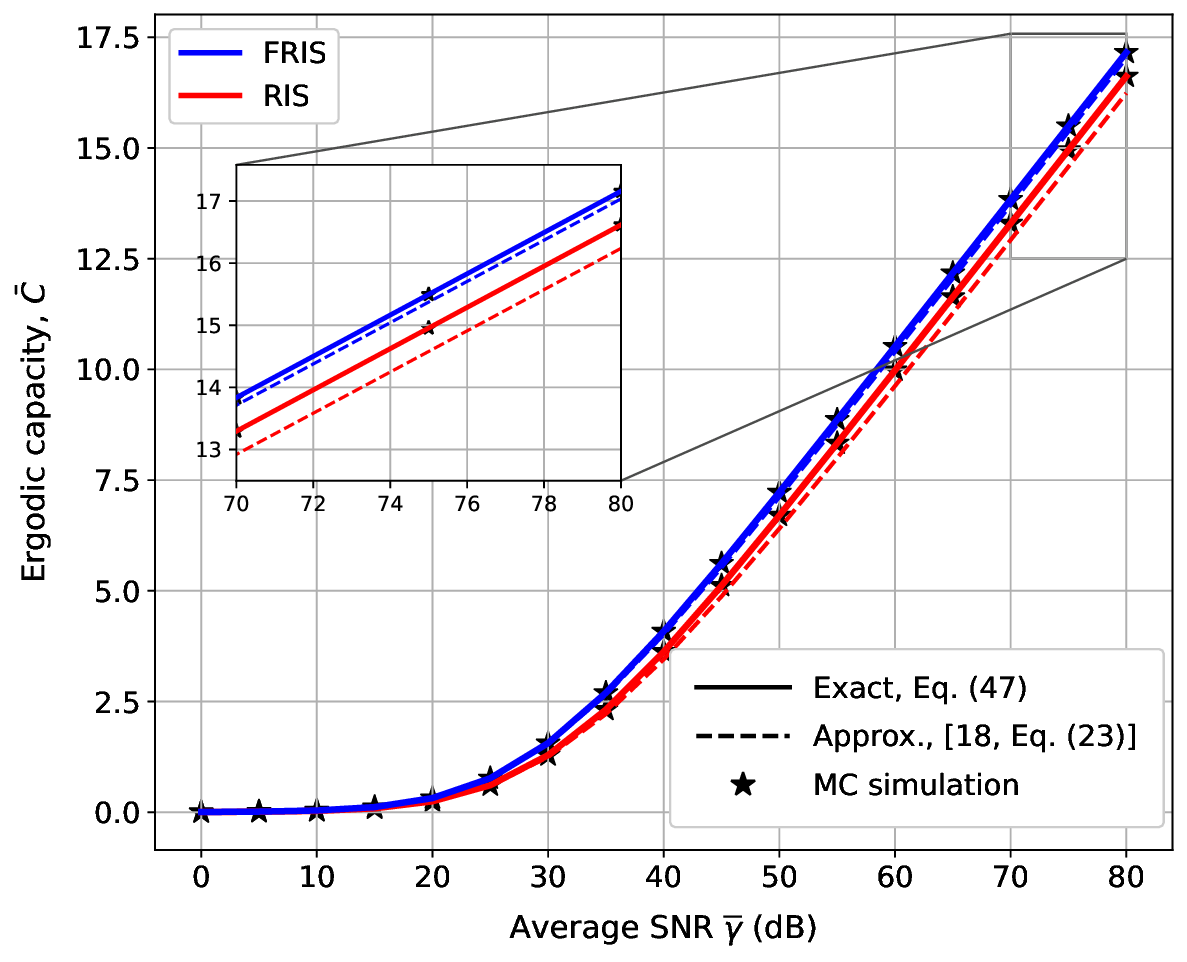}
\caption{EC versus the average SNR, $\bar{\gamma}$, for different FRIS and RIS configurations with $M_{\mathrm{on}} = 25$.}
\label{fig: EC 25}
\end{figure}

\begin{figure}[t!]
\centering
\includegraphics[width=0.8\linewidth]{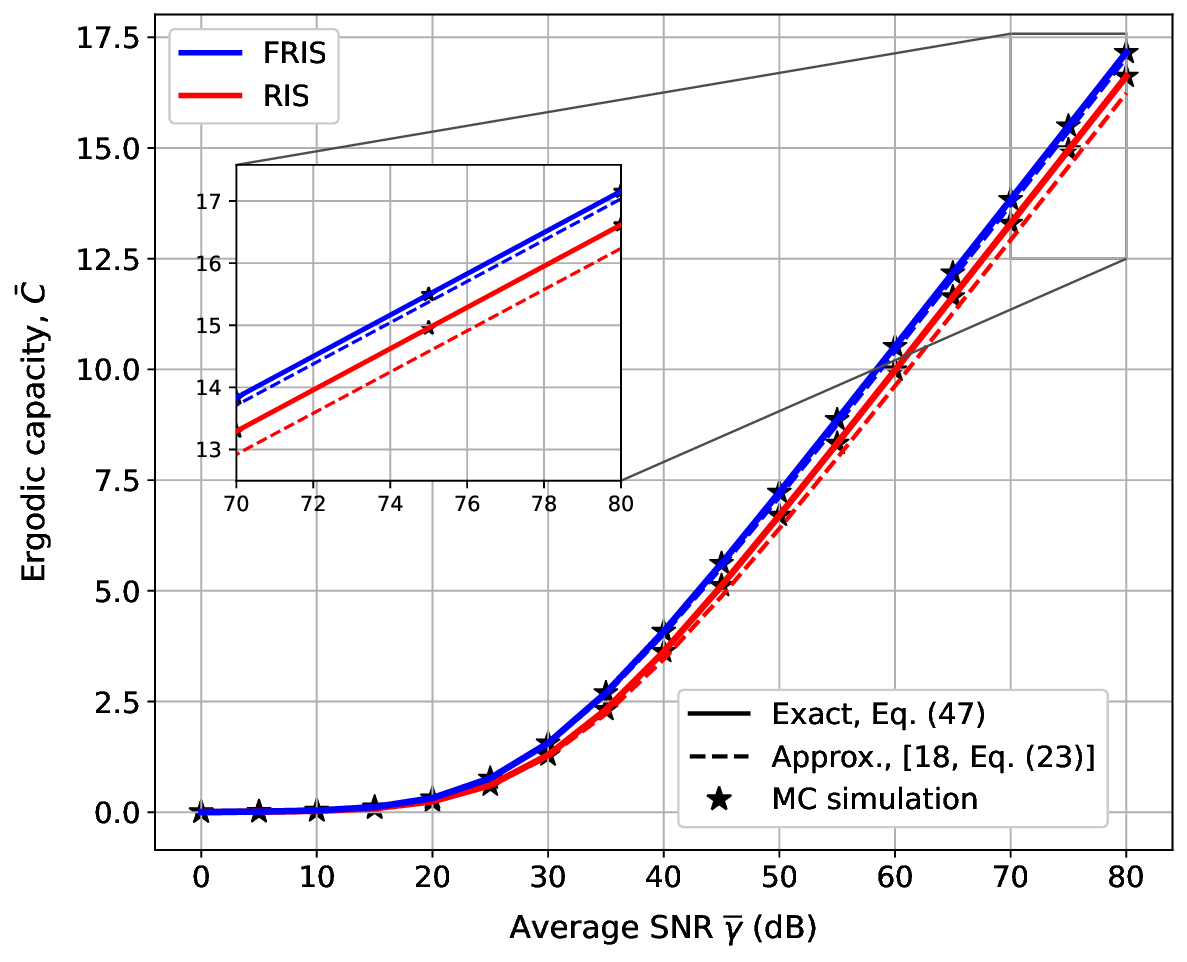}
\caption{EC versus the average SNR, $\bar{\gamma}$, for different FRIS and RIS configurations with $M_{\mathrm{on}} = 36$.}
\label{fig: EC 36}
\end{figure}

\section{Conclusion}
\label{sec: Conclusions}
This work developed an exact analytical framework for FRIS-assisted wireless systems by characterizing the cascaded channel gain under arbitrary spatial correlation. Exploiting the eigenstructure of the FRIS-induced correlation matrix, we showed that the channel gain admits a finite mixture of $K$-distributions. This provides, to the best of the authors’ knowledge, the first exact closed-form statistical characterization of FRIS-assisted systems and, as a special case, correlated RIS-assisted systems.
The framework unifies multiple operating regimes---from fully correlated to effectively and intrinsically uncorrelated---each with a clear physical interpretation. Using the exact statistics, we derived closed-form expressions for the OP and EC. The asymptotic OP analysis reveals a universal diversity order of one, independent of spatial correlation or fluidic reconfiguration, refining the diversity behavior reported in \cite[Corollary~1]{11154019Ghadi}.
Numerical results validate the analytical expressions, demonstrate improved accuracy over \cite{11154019Ghadi}, and show that FRIS architectures can significantly outperform conventional RIS by exploiting physical reconfiguration to suppress spatial correlation.

% This work presented a comprehensive analytical and numerical study of
% FRIS and compared their performance
% with \ac{RIS}. The proposed
% formulation derived the exact distribution of the cascaded channel gain
% $G_0 = |\mathbf{g}_\text{u}^\text{H}\mathbf{A}\mathbf{g}_\text{f}|^2$ using a finite linear combination of $K$-distributions and verified it through Monte Carlo simulations. The analytical and simulation results were shown to be in perfect agreement, confirming the accuracy of the proposed model.
% Overall, these results highlight the potential of FRIS-based architectures to enhance the robustness and efficiency of reconfigurable wireless links. We obtained closed-form expressions for the PDF and CDF of the effective channel using linear combination of $K$-distributions, offering clear insight into FRIS-assisted channel behavior. We also derived analytical formulas for the outage probability and ergodic capacity, along with high-SNR asymptotic characterizations. Numerical evaluations show that FRIS can markedly enhance both reliability and spectral efficiency compared with conventional RIS. Overall, these results highlight FRIS as a scalable and adaptable enabling technology for next-generation wireless networks

\begin{appendices}
\section{Proof of the Theorem}
\label{app: Theorem}

Let $\mathbf{C} \triangleq \mathbf{A} \mathbf{A}^\text{H} \in\mathbb{C}^{M_{\mathrm{on}}\times M_{\mathrm{on}}}$. Throughout the analysis, $\boldsymbol{\Phi}$ is assumed to be arbitrary but fixed.
From \eqref{eq:A-theta} and since $\mathbf{R}_S$ is Hermitian and positive semidefinite, it can be shown that $\mathbf{C}$ is also Hermitian and positive semidefinite. Consequently, all eigenvalues of $\mathbf{C}$ are nonnegative, i.e., $\lambda_i \ge 0$.

Define $Z \triangleq \mathbf{g}_\text{u}^\text{H}\mathbf{A}\mathbf{g}_\text{f}$ and $T \triangleq \mathbf{g}_\text{u}^\text{H}\mathbf{C}\mathbf{g}_\text{u}$.
% Conditioned on $\mathbf{g}_\text{u}$, it follows that $Z| \mathbf{g}_\text{u} \sim \mathcal{CN}(0,T | \mathbf{g}_\text{u})$, where $T | \mathbf{g}_\text{u}$ is a deterministic quantity.
Upon conditioning on $\mathbf{g}_\text{u}$, the quantity $T | \mathbf{g}_\text{u}$ becomes deterministic.
% On the other hand, since $\mathbf{g}_\text{f}$ is independent of $\mathbf{g}_\text{u}$,
Moreover, conditioning on $\mathbf{g}_\text{u}$, the mean and variance of $Z$ are given, respectively, by
\par\nobreak\vspace{-\abovedisplayskip}
\small
\begin{align}
\mathbb{E}\!\left[ Z | \mathbf{g}_\text{u} \right]
&= \mathbb{E}\!\left[\mathbf{g}_\text{u}^\text{H}\mathbf{A}\mathbf{g}_\text{f} | \mathbf{g}_\text{u} \right] = \mathbf{g}_\text{u}^\text{H}\mathbf{A}\,\mathbb{E}\!\left[\mathbf{g}_\text{f}\right]= \mathbf{g}_\text{u}^\text{H}\mathbf{A}\mathbf{0} = 0,
\end{align}
\normalsize
and
\par\nobreak\vspace{-\abovedisplayskip}
\small
\begin{align}
    \mathbb{E}\!\left[|Z|^2 | \mathbf{g}_\text{u} \right]
    &= \mathbb{E}\!\left[
        \mathbf{g}_\text{u}^\text{H}\mathbf{A}\mathbf{g}_\text{f}\,
        \mathbf{g}_\text{f}^\text{H}\mathbf{A}^\text{H}\mathbf{g}_\text{u}
        | \mathbf{g}_\text{u}
    \right] \nonumber\\
    &= \mathbf{g}_\text{u}^\text{H}\mathbf{A}\,
       \mathbb{E}\!\left[\mathbf{g}_\text{f}\mathbf{g}_\text{f}^\text{H}\right]\,
       \mathbf{A}^\text{H}\mathbf{g}_\text{u}| \mathbf{g}_\text{u} \nonumber\\
    &\overset{(a)}{=} \mathbf{g}_\text{u}^\text{H}\mathbf{A}\mathbf{A}^\text{H}\mathbf{g}_\text{u} | \mathbf{g}_\text{u} \overset{(b)}{=} T | \mathbf{g}_\text{u},
\end{align}
\normalsize
where in step~(a) we used $\mathbb{E}[\mathbf{g}_\text{f}\mathbf{g}_\text{f}^\text{H}] = \mathbf{I}_{M_{\mathrm{on}}}$, and in step~(b) we used the definition $\mathbf{C} = \mathbf{A}\mathbf{A}^\text{H}$ and $T| \mathbf{g}_\text{u} = \mathbf{g}_\text{u}^\text{H}\mathbf{C}\mathbf{g}_\text{u}| \mathbf{g}_\text{u}$.
Consequently, $Z$ is a circularly symmetric complex Gaussian random variable with zero mean and variance $T | \mathbf{g}_\text{u}$, i.e., $Z | \mathbf{g}_\text{u} \sim \mathcal{CN}\!\left(0,\,T | \mathbf{g}_\text{u}\right)$. Furthermore, since $G_0 = |Z|^2$, it follows that, conditioned on $T$ (or equivalently on $\mathbf{g}_\text{u}$), $G_0$ follows an exponential distribution with PDF given by
\par\nobreak\vspace{-\abovedisplayskip}
\small
\begin{align}
    f_{G_0}(g | T= t)
    = \frac{1}{t}\exp\!\left(-\frac{g}{t}\right).
    \label{eq:PDF-exp-G0}
\end{align}
\normalsize

Note from \eqref{eq:PDF-exp-G0} that obtaining the unconditional PDF of $G_0$ requires knowledge of the distribution of $T$. To this end, we employ a spectral decomposition approach, as detailed next.

% Define $r \triangleq \operatorname{rank}\big(\mathbf{C}(\boldsymbol{\theta})\big) = \sum_{i=1}^{q} m_i$.
Define $r \triangleq \sum_{i=1}^{q} m_i$. Because $\mathbf{C}$ is Hermitian and positive semidefinite, there exists a 
unitary matrix 
\(
\mathbf{U}\in\mathbb{C}^{M_{\mathrm{on}}\times M_{\mathrm{on}}}
\)
(i.e., $\mathbf{U}^\text{H}\mathbf{U}=\mathbf{U}\mathbf{U}^\text{H}=\mathbf{I}$) such that
\par\nobreak\vspace{-\abovedisplayskip}
\small
\begin{align}
    \mathbf{C}
= 
\mathbf{U}\,
\mathrm{diag}\!\Big(
\underbrace{\lambda_1,\ldots,\lambda_1}_{m_1},
\ldots,
\underbrace{\lambda_q,\ldots,\lambda_q}_{m_q},
\underbrace{0,\ldots,0}_{M_{\mathrm{on}}-r}
\Big)
\,\mathbf{U}^\text{H},
\end{align}
\normalsize
where $\{\lambda_i\}_{i=1}^{q}$ denote the $q$ distinct positive eigenvalues of $\mathbf{C}$ with corresponding multiplicities $\{m_i\}_{i=1}^{q}$.

Define the rotated channel vector $\mathbf{z} \triangleq \mathbf{U}^\text{H}\mathbf{g}_\text{u}$. Since $\mathbf{g}_\text{u}\sim\mathcal{CN}(\mathbf{0},\mathbf{I}_{M_{\mathrm{on}}})$
and the circularly symmetric complex Gaussian distribution is invariant
under unitary transformations, it follows that $
\mathbf{z}\sim\mathcal{CN}(\mathbf{0},\mathbf{I}_{M_{\mathrm{on}}})$.

Let $z_{i,\ell}$ denote the $\ell$-th component of the rotated vector 
$\mathbf{z}=\mathbf{U}^\text{H}\mathbf{g}_\text{u}$ corresponding to the eigenvalue 
group associated with $\lambda_i$, where $\ell=1,\ldots,m_i$.  
Because unitary transformations preserve the circularly symmetric complex 
Gaussian distribution, each $z_{i,\ell}$ is an independent 
$\mathcal{CN}(0,1)$ random variable.  
Consequently, $|z_{i,\ell}|^{2}\sim\mathrm{Exp}(1)$, independently for all 
$i$ and $\ell$.  

With this notation, the quadratic form 
$T=\mathbf{g}_\text{u}^\text{H}\mathbf{C}\mathbf{g}_\text{u}$ can be written in the 
diagonalized basis as
\par\nobreak\vspace{-\abovedisplayskip}
\small
\begin{align}
\label{eq:T-basis}
T = \sum_{i=1}^{q} \lambda_i \sum_{\ell=1}^{m_i} |z_{i,\ell}|^{2}.
\end{align}
\normalsize

Note from \eqref{eq:T-basis} that $T$ is a finite weighted sum of independent
unit-mean exponential random variables, where each weight $\lambda_i$
corresponds to an eigenvalue of $\mathbf{C}$ and appears according to its
multiplicity $m_i$.  
Therefore, $T$ follows a generalized chi-square distribution, whose Laplace transform admits the compact representation~\cite{Moschopoulos85}
\par\nobreak\vspace{-\abovedisplayskip}
\small
\begin{align}
    \label{eq:LT-T}
    \Phi_T(s) = \prod_{i=1}^{q}(1+\lambda_i s)^{-m_i}.
\end{align}
\normalsize

Applying a partial-fraction decomposition to \eqref{eq:LT-T}, it follows that
\par\nobreak\vspace{-\abovedisplayskip}
\small
\begin{equation}
\Phi_T(s)
= \sum_{i=1}^{q}\sum_{k=1}^{m_i}
\frac{c_{i,k}}{(1+\lambda_i s)^{k}},
\label{eq:PF}
\end{equation}
\normalsize
where the coefficients $c_{i,k}$ correspond to the residues associated with the 
repeated poles at $s=-1/\lambda_i$. 
These coefficients $c_{i,k}$ can be obtained explicitly by applying the standard
repeated-pole residue formula~\cite{Kreyszig10}, namely,
\par\nobreak\vspace{-\abovedisplayskip}
\small
\begin{equation}
c_{i,k}
=
\frac{1}{(m_i-k)!}\,
\lim_{s\to -1/\lambda_i}
\frac{d^{\,m_i-k}}{ds^{\,m_i-k}}
\!\left[
(1+\lambda_i s)^{m_i} \,\Phi_T(s)
\right].
\label{eq:residues}
\end{equation}
\normalsize

To obtain an explicit expression for the coefficients $c_{i,k}$ in
\eqref{eq:PF}, we consider the most general setting in which the
eigenvalues $\{\lambda_i\}_{i=1}^{q}$ of $\mathbf{C}$ may exhibit
arbitrary multiplicities, i.e., $m_i>1$ for one or more indices $i$.
In this case, the Laplace transform of $T$ in \eqref{eq:LT-T}
has, for each $i$, a pole of order $m_i$ located at $s=-1/\lambda_i$.

Multiplying \eqref{eq:LT-T} by $(1+\lambda_i s)^{m_i}$ cancels the
pole of order $m_i$, yielding
\par\nobreak\vspace{-\abovedisplayskip}
\small
\begin{equation}
\label{eq:Gi-def}
G_i(s)
\triangleq (1+\lambda_i s)^{m_i}\Phi_T(s)
= \prod_{\substack{j=1 \\ j\neq i}}^{q}(1+\lambda_j s)^{-m_j},
\end{equation}
\normalsize
which is analytic at $s=-1/\lambda_i$.

Now, we introduce the local variable:
\par\nobreak\vspace{-\abovedisplayskip}
\small
\begin{align}
    \label{eq: local variables}
    u \triangleq 1+\lambda_i s \, 
\Longleftrightarrow \, 
s = \frac{u-1}{\lambda_i},
\end{align}
\normalsize
which, for $j\neq i$, we have $1+\lambda_j s
= r_j u + (1-r_j)$, with $r_j \triangleq \frac{\lambda_j}{\lambda_i}$.
Accordingly,
\par\nobreak\vspace{-\abovedisplayskip}
\small
\begin{equation}
\label{eq:Hi-def}
H_i(u)
\triangleq
G_i(s(u))
= \prod_{j\neq i}(r_j u + 1 - r_j)^{-m_j},
\end{equation}
\normalsize
which is analytic around $u=0$ and admits a convergent Taylor expansion.

The global partial-fraction decomposition of $\Phi_T(s)$ is given in \eqref{eq:PF}, which includes contributions from all poles at
$s=-1/\lambda_1,\ldots,-1/\lambda_q$.
However, when computing the coefficient $c_{i,k}$, we focus on the
\emph{local} behavior of $\Phi_T(s)$ near the specific pole
$s=-1/\lambda_i$.  In this neighborhood, only the terms
\(
(1+\lambda_i s)^{-k}
\)
are singular, while all other factors 
\(
(1+\lambda_j s)^{-m_j},\; j\neq i,
\)
remain analytic and thus contribute only regular (nonsingular) terms.
Therefore, near $s=-1/\lambda_i$, the local expansion reduces to
\par\nobreak\vspace{-\abovedisplayskip}
\small
\begin{equation}
\label{eq:local-PF}
\Phi_T(s)
= \sum_{k=1}^{m_i}
\frac{c_{i,k}}{(1+\lambda_i s)^{k}}
\;+\; \text{analytic terms}.
\end{equation}
\normalsize
% and involves only a single summation over $k$.
% This is a standard consequence of the Laurent expansion of rationalfunctions around an isolated pole and explains why the coefficients $c_{i,k}$ are obtained using a one-dimensional index.

Now, we multiply \eqref{eq:local-PF} by $(1+\lambda_i s)^{m_i}=u^{m_i}$ to obtain
\par\nobreak\vspace{-\abovedisplayskip}
\small
\begin{align}
    \label{}
    u^{m_i}\Phi_T(s(u))
= \sum_{k=1}^{m_i} c_{i,k}\,u^{m_i-k}.
\end{align}
\normalsize
Since $u^{m_i}\Phi_T(s(u)) = H_i(u)$, it follows from \eqref{eq:Hi-def} that
\par\nobreak\vspace{-\abovedisplayskip}
\small
\begin{equation}
\label{eq:Hi-expansion}
H_i(u)
= \sum_{k=1}^{m_i} c_{i,k}\,u^{m_i-k}.
\end{equation}
\normalsize
Thus, the coefficient of $u^{m_i-k}$ in the Taylor expansion of $H_i(u)$
equals $(m_i-k)!\,c_{i,k}$.
Finally, extracting the corresponding Taylor coefficient yields
\eqref{eq: coeff cik}, which provides the exact residues associated with the repeated pole at $s=-1/\lambda_i$ and remains valid for arbitrary eigenvalue
multiplicities.

Having obtained the coefficients $c_{i,k}$, we can now invert the partial-fraction
representation in \eqref{eq:PF}. Using the inverse Laplace transform identity in
\cite[Table~II, p.~210]{bookSchiff99}, the PDF of $T$ follows immediately as
\par\nobreak\vspace{-\abovedisplayskip}
\small
\begin{equation}
f_T(t)
= \sum_{i=1}^{q}\sum_{k=1}^{m_i}
\frac{c_{i,k} \, t^{k-1}}{\lambda_i^{k}\Gamma(k)} \exp \left( - \frac{t}{\lambda_i}\right).
\label{eq:fT}
\end{equation}
\normalsize

Now, the PDF of $G_0$ is obtained by averaging the conditional density over $T$, namely,
\par\nobreak\vspace{-\abovedisplayskip}
\small
\begin{align}
    \label{eq:averaging-over-T}
    f_{G_0}(g)
    = \int_{0}^{\infty} f_{G_0}(g \mid T=t)\, f_T(t)\, \mathrm{d}t.
\end{align}
\normalsize

Substituting \eqref{eq:PDF-exp-G0} and \eqref{eq:fT} into 
\eqref{eq:averaging-over-T}, and interchanging the order of summation and
integration, we obtain
\par\nobreak\vspace{-\abovedisplayskip}
\small
\begin{align}
    \label{eq:PDF-G0-inner-integral}
    \nonumber f_{G_0}(g)
    = &\sum_{i=1}^{q}\sum_{k=1}^{m_i}
       \frac{c_{i,k}}{\lambda_i^{k}\Gamma(k)}\\
       & \times  \int_{0}^{\infty}
       \exp\!\left(-\frac{g}{t}\right)
       t^{k-2}
       \exp\!\left(-\frac{t}{\lambda_i}\right)
       \mathrm{d}t.
\end{align}
\normalsize
The interchange of summation and integration is justified because $f_{G_0}(g\mid T=t)$ and $f_T(t)$ are non–negative functions for all
$t,g\ge 0$, and the resulting integrand is absolutely integrable.

The integral in \eqref{eq:PDF-G0-inner-integral} can be evaluated in closed form by
invoking \cite[eq.~(3.471.9)]{Gradshteyn2007}, which yields the PDF of $G_0$ given in
\eqref{eq:PDF_K}.
The CDF of $G_0$, on the other hand, follows by integrating $f_{G_0}(g)$ from $0$ to $g$, i.e.,
$\int_{0}^{g} f_{G_0}(u)\,\mathrm{d}u$. With the aid of the identity in
\cite[eq.~(9.6.28)]{Abramowitz1972}, this integral can also be evaluated in
closed form, resulting in \eqref{eq:CDF_K}.  
This completes the proof.

% the expression
% \begin{equation}
% \label{eq:cik-final}
% c_{i,k}
% = \frac{1}{(m_i-k)!}\,
% \frac{d^{\,m_i-k}}{du^{\,m_i-k}}
% H_i(u)\bigg|_{u=0}.
% \end{equation}
% Expression \eqref{eq:cik-final} provides the exact residues associated with the
% repeated pole at $s=-1/\lambda_i$ and is valid for arbitrary eigenvalue
% multiplicities.

\section{Proof of Corollary~\ref{sec: corollary 1}}
\label{app: corollary 1}

When all eigenvalues of $\mathbf{C}$ are simple, i.e., $m_i = 1$ for every $i$,
the Laplace transform of $T$ in \eqref{eq:LT-T} reduces to
\par\nobreak\vspace{-\abovedisplayskip}
\small
\begin{align}
\label{eq:LT-T-simple}
\Phi_T(s)
= \prod_{i=1}^{q} (1+\lambda_i s)^{-1}.
\end{align}
\normalsize
In this case, the partial-fraction expansion \eqref{eq:PF} contains only
first–order terms, and therefore $c_{i,k}=0$ for every $k \ge 2$.

For the simple pole at $s=-1/\lambda_i$ in \eqref{eq:LT-T-simple}, the
corresponding coefficient in the partial-fraction expansion is
\par\nobreak\vspace{-\abovedisplayskip}
\small
\begin{equation}
c_{i,1}
= \lim_{s\to -1/\lambda_i} (1+\lambda_i s)\,\Phi_T(s).
\label{eq:ci1-limit}
\end{equation}
\normalsize

Factoring out the $i$-th term, \eqref{eq:ci1-limit} becomes
\par\nobreak\vspace{-\abovedisplayskip}
\small
\begin{align}
\label{eq:Factoring-out}
\Phi_T(s)
= (1+\lambda_i s)^{-1}
  \prod_{\substack{j=1 \\ j\neq i}}^{q} (1+\lambda_j s)^{-1}.
\end{align}
\normalsize
Multiplying both sides of \eqref{eq:Factoring-out} by $(1+\lambda_i s)$ gives
\par\nobreak\vspace{-\abovedisplayskip}
\small
\begin{align}
(1+\lambda_i s)\,\Phi_T(s)
= \prod_{\substack{j=1 \\ j\neq i}}^{q} (1+\lambda_j s)^{-1}.
\end{align}
\normalsize

Evaluating the limit at $s = -1/\lambda_i$ yields, for each $j\neq i$,
\par\nobreak\vspace{-\abovedisplayskip}
\small
\begin{align}
1+\lambda_j s
= 1 - \frac{\lambda_j}{\lambda_i}
= \frac{\lambda_i - \lambda_j}{\lambda_i},
\end{align}
and therefore
\begin{align}
(1+\lambda_j s)^{-1}
= \frac{\lambda_i}{\lambda_i - \lambda_j}.
\label{eq:lambda-final}
\end{align}
\normalsize

Substituting \eqref{eq:lambda-final} into \eqref{eq:ci1-limit} yields the
closed-form coefficient in \eqref{eq:ci1-simple}. Together with the fact that
$c_{i,k}=0$ for all $k \ge 2$ in the simple-eigenvalue case, this directly leads
to the final PDF and CDF expressions in \eqref{eq:PDFsimple-final} and
\eqref{eq:CDFsimple-final}, respectively.  
This concludes the proof.

\section{Proof of Corollary~\ref{sec: corollary 2}}
\label{app: corollary 2}

Assume that all nonzero eigenvalues of $\mathbf{C}$ are identical, such that
$\mathbf{C}=\lambda_1\,\mathbf{I}_{r}$ on its $r$-dimensional support, where
$1 \le r \le M_{\mathrm{on}}$. Under this condition, the eigenvalue structure
reduces to $q=1$, with a single eigenvalue $\lambda_1$ of multiplicity
$m_1=r$.

From the eigen-decomposition described in Appendix~\ref{app: Theorem}, the first
$r$ components of the rotated vector $\mathbf{z}$ satisfy
$z_i \sim \mathcal{CN}(0,1)$, while the remaining $M_{\mathrm{on}}-r$ components
lie in the null space of $\mathbf{C}$ and therefore do not contribute to the
quadratic form $T$. Consequently,
\par\nobreak\vspace{-\abovedisplayskip}
\small
\begin{align}
T
= \mathbf{g}_\text{u}^\text{H}\mathbf{C}\mathbf{g}_\text{u}
= \lambda_1\sum_{i=1}^{r}|z_i|^{2}
\sim \lambda_1\,\mathcal{G}(r,1),
\end{align}
\normalsize
which implies that $T$ follows a Gamma distribution with shape parameter $r$
and scale parameter $\lambda_1$. Accordingly, the PDF of $T$ is given by
\par\nobreak\vspace{-\abovedisplayskip}
\small
\begin{equation}
f_T(t)
= \frac{t^{r-1}}{\lambda_1^{r}\Gamma(r)}\,
  \exp\!\left(-\frac{t}{\lambda_1}\right),
\qquad t \ge 0.
\label{eq:fT-equal}
\end{equation}
\normalsize
Conditioned on $T$, the cascaded channel gain satisfies
$G_0 | T \sim \mathrm{Exp}(1/T)$, as stated in \eqref{eq:PDF-exp-G0}. The unconditional PDF of $G_0$ therefore follows from the averaging integral in
\eqref{eq:averaging-over-T}. Substituting \eqref{eq:fT-equal} into
\eqref{eq:averaging-over-T} and applying \cite[eq.~(3.471.9)]{Gradshteyn2007}
yields the closed-form expression in \eqref{eq:PDF-cor2-final}.

Finally, integrating \eqref{eq:PDF-cor2-final} from $0$ to $g$ and invoking
\cite[eq.~(9.6.28)]{Abramowitz1972} gives the corresponding CDF expression in
\eqref{eq:CDF-cor2-final}. This completes the proof.

\section{Derivation of Ergodic Capacity}
\label{app: ergodic capacity}

\subsection{General-Multiplicity Case}

Plugging \eqref{eq:PDF_G0} into \eqref{eq:C-def} and interchanging the order of
integration and summation, the ergodic capacity can be written as
\par\nobreak\vspace{-\abovedisplayskip}
\small
\begin{align}
\bar C
&=
\sum_{i=1}^{q}\sum_{k=1}^{m_i}
c_{i,k}\,
\bar C_K(k,\lambda_i;\bar{\gamma} L_\text{f} L_\text{u}),
\label{eq:C_mix}
\end{align}
\normalsize
where
\par\nobreak\vspace{-\abovedisplayskip}
\small
\begin{align}
\label{eq: C bar def}
\nonumber \bar C_K &(k,\lambda_i;\bar{\gamma} L_\text{f} L_\text{u})\\
& \triangleq
\int_{0}^{\infty}
\log_2\!\left(1+\bar{\gamma} L_\text{f} L_\text{u}\, g\right)
f_K(k,\lambda_i;g)\,\text{d}g.
\end{align}
\normalsize

Replacing \eqref{eq:PDF_K} into \eqref{eq: C bar def} and using
$\log_2(x)=\ln(x)/\ln(2)$ yields
\par\nobreak\vspace{-\abovedisplayskip}
\small
\begin{align}
\nonumber
\bar C_K &(k,\lambda_i;\bar{\gamma} L_\text{f} L_\text{u})
=
\frac{2\,\lambda_i^{-\frac{k+1}{2}}}{\Gamma(k)\,\ln (2)} \\
& \times
\int_0^\infty
\ln(1+\bar{\gamma} L_\text{f} L_\text{u} g)\,
g^{\frac{k-1}{2}}\,
K_{k-1}\!\left(2\sqrt{\frac{g}{\lambda_i}}\right)\,
\text{d}g.
\label{eq:Ck_start}
\end{align}
\normalsize

Next, performing the change of variables $g=\lambda_i x$ with
$\text{d}g=\lambda_i\,\text{d}x$, we obtain
\par\nobreak\vspace{-\abovedisplayskip}
\small
\begin{align}
\bar C_K &(k,\lambda_i;\bar{\gamma} L_\text{f} L_\text{u})
=
\frac{2}{\Gamma(k)\,\ln (2)}\\
& \times
\int_0^\infty
\ln(1+\bar{\gamma} L_\text{f} L_\text{u} \lambda_i x)\,
x^{\frac{k-1}{2}}\,
K_{k-1}(2\sqrt{x})\,\text{d}x.
\label{eq:Ck_norm}
\end{align}
\normalsize

To proceed, we employ the Mellin–Barnes representation of the logarithm function
(valid for $x>0$)~\cite{GradshteynRyzhik2000}
\par\nobreak\vspace{-\abovedisplayskip}
\small
\begin{equation}
\ln(1+x)
=
\frac{1}{2\pi \ii}
\int_{\mathcal{L}}
\frac{\Gamma(s)\Gamma(1-s)}{s}\,x^s\,\text{d}s,
\label{eq:MB_log}
\end{equation}
\normalsize
where $\mathcal{L}$ denotes a Bromwich contour separating the poles of
$\Gamma(s)$ and $\Gamma(1-s)$.

Substituting \eqref{eq:MB_log} into \eqref{eq:Ck_norm} and interchanging the
order of integration yields
\par\nobreak\vspace{-\abovedisplayskip}
\small
\begin{align}
\nonumber
\bar C_K &(k,\lambda_i;\bar{\gamma} L_\text{f} L_\text{u})
=
\frac{1}{\Gamma(k)\,\ln (2)} \\
& \times
\frac{1}{\pi \ii}
\int_{\mathcal{L}}
\frac{\Gamma(s)\Gamma(1-s)}{s}
\left(\bar{\gamma} L_\text{f} L_\text{u} \lambda_i\right)^s
I(s)\,\text{d}s,
\label{eq: contour C bar}
\end{align}
\normalsize
where $I(s)
\triangleq
\int_0^\infty
x^{s+\frac{k-1}{2}}\,
K_{k-1}(2\sqrt{x})\,\text{d}x$.
% \begin{equation}
% I(s)
% \triangleq
% \int_0^\infty
% x^{s+\frac{k-1}{2}}\,
% K_{k-1}(2\sqrt{x})\,\text{d}x.
% \label{eq:Is_def}
% \end{equation}

Applying the change of variables $y=2\sqrt{x}$ and using
\cite[eq.~(6.561.16)]{Gradshteyn2007}, the integral $I(s)$ admits the closed form
\par\nobreak\vspace{-\abovedisplayskip}
\small
\begin{equation}
I(s)
=
\frac{1}{2}\,
\Gamma(s+k)\Gamma(s+1).
\label{eq:Is_closed}
\end{equation}
\normalsize

Substituting \eqref{eq:Is_closed} into \eqref{eq: contour C bar} and using the
identity $\Gamma(s+1)=s\Gamma(s)$, the ergodic capacity term simplifies to
\par\nobreak\vspace{-\abovedisplayskip}
\small
\begin{equation}
\begin{aligned}
\bar C_K &(k,\lambda_i;\bar{\gamma} L_\text{f} L_\text{u})
=
\frac{1}{\Gamma(k)\,\ln (2)} \\
& \times
\frac{1}{2\pi \ii}
\int_{\mathcal{L}}
\Gamma(s)^2\,
\Gamma(1-s)\,
\Gamma(s+k)\,
\left(\bar{\gamma} L_\text{f} L_\text{u} \lambda_i\right)^s
\text{d}s.
\end{aligned}
\label{eq:Ck_MB_final}
\end{equation}
\normalsize

Recognizing the complex integral representation in \eqref{eq:Ck_MB_final} as a Meijer-$G$ function
\cite[eq.~(07.34.02.0001.01)]{Mathematica}, it follows that
\par\nobreak\vspace{-\abovedisplayskip}
\small
\begin{align}
\nonumber \bar C_K(k,\lambda_i;\bar{\gamma} L_\text{f} L_\text{u})
& =
\frac{1}{\ln (2) \Gamma(k)}\,
\\
& \times
G_{4,2}^{1,4}\!\left(
\bar{\gamma} L_\text{f} L_\text{u} \lambda_i\;\Bigg|\;
\begin{matrix}
1-k,\;0,\;1,\;1\\
1,\;0
\end{matrix}
\right).
\label{eq:Ck_meijerg}
\end{align}
\normalsize

Finally, substituting \eqref{eq:Ck_meijerg} into \eqref{eq:C_mix} yields the exact closed-form expression for the EC, as given in \eqref{eq:C_total_meijerg}.

\subsection{Simple-Eigenvalue Case}
An exact closed-form expression for the EC in the simple-eigenvalue case is readily obtained from \eqref{eq:C_total_meijerg} by setting $m_i=1$ for all $i$, which yields \eqref{eq:C_total_meijerg_simple}.

\subsection{Equal-Eigenvalue Case}
In this case, the composite matrix $\mathbf{C}$ admits a single distinct eigenvalue, i.e., $q=1$ with eigenvalue $\lambda_1$ of multiplicity $m_1$. Accordingly, the ergodic capacity expression in \eqref{eq:C_total_meijerg_simple} simplifies and reduces to \eqref{eq:EC-final-equal}.

\subsection{Uncorrelated Case}
When $\mathbf{C}=\mathbf{I}_{M_{\mathrm{on}}}$, the effective gain $G_0$ follows a
single $K$-distribution with shape parameter $k=m_1=M_{\mathrm{on}}$ and unit
scale parameter $\lambda_1=1$. Accordingly, \eqref{eq:EC-final-equal} directly applies, and
the EC simplifies to
\eqref{eq:EC-final-uncorrelated}. This concludes the derivation.

\end{appendices}

\bibliographystyle{IEEEtran}
\bibliography{references} % no .bib extension

\end{document}